\newtheorem{proposition}{Proposition}
\newcommand{\absets}{\texttt{ABSE-TS}\xspace}
\newcommand{\apx}{\textsf{APX}}
\newcommand{\np}{\textsf{NP}}
\newcommand{\p}{\textsf{P}}
\newcommand{\F}{\mathcal{F}}
\renewcommand{\H}{\mathcal{H}}
\newcommand{\T}{\mathcal{T}}
\newcommand{\U}{\mathcal{Q}}
\DeclareMathOperator{\UE}{UE}
\title{An Improved Algorithm for Computing All the Best Swap Edges of a Tree Spanner}
\author[1]{Davide Bil\`o}
\author[2]{Feliciano Colella}
\author[3]{Luciano Gual\`a}
\author[4]{Stefano Leucci}
\author[5,6]{Guido Proietti}
\affil[1]{Universit\`a  di Sassari, Italy. \texttt{davide.bilo@uniss.it}}
\affil[2]{Gran Sasso Science Institute, L'Aquila,  Italy. \texttt{feliciano.colella@gssi.it}}
\affil[3]{Universit\`a  di Roma ``Tor Vergata'', Italy. \texttt{guala@mat.uniroma2.it}}
\affil[4]{ETH Z\"urich, Switzerland. \texttt{stefano.leucci@inf.ethz.ch}}
\affil[5]{Universit\`a  degli Studi dell'Aquila, Italy. \texttt{guido.proietti@univaq.it}}
\affil[6]{Istituto di Analisi dei Sistemi ed Informatica, CNR, Roma, Italy.}
\authorrunning{D. Bil\`o et al.}
\subjclass{G.2.2 [Graph Theory] Graph algorithms, Trees}
\keywords{Transient edge failure, Swap algorithm, Tree spanner}
\let\@DOIPrefix\@empty
\begin{document}

\maketitle

\begin{abstract}
A \emph{tree $\sigma$-spanner} of a positively real-weighted $n$-vertex and $m$-edge undirected graph $G$ is a spanning tree $T$ of $G$ which approximately preserves (i.e., up to a multiplicative \emph{stretch factor} $\sigma$) distances in $G$.
Tree spanners with provably good stretch factors find applications in communication networks, distributed systems, and network design. However, finding an optimal or even a good tree spanner is a very hard computational task.  Thus, if one has to face a \emph{transient} edge failure in $T$, the overall effort that has to be afforded to rebuild a new tree spanner (i.e., computational costs, set-up of new links, updating of the routing tables, etc.) can be rather prohibitive. To circumvent this drawback, an effective alternative is that of associating with each  tree edge a best possible (in terms of resulting stretch) \emph{swap edge} -- a well-established approach in the literature for several other tree topologies. Correspondingly, the problem of computing \emph{all} the best swap edges of a tree spanner is a challenging algorithmic problem, since solving it efficiently means to exploit the structure of shortest paths not only in $G$, but also in all the scenarios in which an edge of $T$ has failed. For this problem we provide a very efficient solution, running in $O(n^2 \log^4 n)$ time, which drastically improves (almost by a quadratic factor in $n$ in dense graphs!) on the previous known best result.
\end{abstract}

\section{Introduction}
The problem of computing \emph{all the best swap edges} (ABSE) of a tree has a long and rich algorithmic tradition.
Basically, let $G = (V(G), E(G),w)$ be an $n$-vertex and $m$-edge $2$-edge-connected undirected graph, with edge-weight function $w : E(G) \rightarrow \mathbb{R}^+$, and assume we are given a spanning tree $T$ of $G$, which was computed by addressing some criterion (i.e., objective function) $\phi$. Then, the problem is that of computing a BSE for every edge $e \in E(T)$, namely an edge $f \in E(G) \setminus E(T)$ such that the swap tree $T_{e/f}$ obtained by swapping $e$ with $f$ in $T$ optimizes some objective function $\phi^\prime$ out of all possible swap trees. Quite reasonably, the function $\phi^\prime$ must be related (if not coinciding at all) with $\phi$.

The first immediate motivation for studying an ABSE problem comes from the edge fault-tolerance setting -- a commonly accepted framework. Broadly speaking, the algorithmic question here is to design \emph{sparse} subgraphs that guarantee a proper level of functionality even in the presence of an edge failure. In such a context, the rationale of an ABSE-based solution is the following: operations are normally performed on a (possibly optimal) spanning tree, and whenever an edge failure takes place, a corresponding BSE is plugged in. This way, the connectivity is reestablished in the most prompt and effective possible way (see also~\cite{IIOY03,Pro00} for some additional practical motivations).

Besides their practical relevance, ABSE problems have also an interesting theoretical motivation. Indeed, swapping can be reviewed as an exploration of the space of the perturbed (w.r.t. an edge swap) solutions to a given spanning tree optimization problem. Thus, the algorithmic challenge of solving efficiently an ABSE problem is related with the understanding of the structure of this space of perturbed solutions.
And this is exactly why each ABSE problem has its own combinatorial richness, and thus requires a specific approach to be solved efficiently. Then, different ABSE problems have required the use of completely different approaches and methods in order to obtain efficient solutions. For instance, the most famous and studied ABSE problem comes when $T$ is a \emph{minimum spanning tree} (MST) of $G$. In this case, a best swap is of course a swap edge minimizing the \emph{cost} (i.e., sum of the edge weights) of the swap tree, i.e., a swap edge of minimum weight (and we know this produces a MST of the perturbed graph). This problem is also known as the MST \emph{sensitivity analysis} problem, and can be solved in $O(m\log\alpha(m,n))$ time~\cite{Pet05}, where $\alpha$ denotes the inverse of the Ackermann function, by using an efficient data structure, namely the \emph{split-findmin} \cite{Gabow85}. This was improving on another efficient solution given by Tarjan \cite{Tarjan82}, running in $O(m \, \alpha(m,n))$ time and making use of the \emph{transmuter}, namely a compact way of representing the cycles of a graph. Other data structures which revealed their usefulness to solve efficiently ABSE problems include \emph{kinetic heaps} \cite{BGP15}, \emph{top trees} \cite{BiloCG0P15}, \emph{mergeable heaps} \cite{NPW04}, and many others.

In this paper, we focus on the ABSE problem on the elusive spanning tree structure, namely the \emph{tree spanner} (\absets problem in the following). A tree spanner is built with the aim of preserving node-to-node distances in $G$. Indeed, the \emph{stretch factor} $\sigma$ of a spanning tree $T$ of $G$ is defined as the \emph{maximum}, over all the pairs $u,v \in V(G)$, of $d_T(u,v)/d_G(u,v)$, where $d_T$ and $d_G$ denote distances in $T$ and $G$, respectively. Correspondingly, an \emph{optimal} tree spanner has minimum stretch out of all the spanning trees of $G$.
 Unfortunately, finding an optimal tree spanner is notoriously an \apx-hard problem, with no known $o(n)$-approximation.
 Hence, once a given solution undergoes a transient edge failure, the recomputation from scratch of a new (near) optimal solution is computationally unfeasible.
 Thus, swapping in a tree spanner is even more attractive than in general, and indeed the \absets problem was studied in \cite{DGW10}, where the authors devised two solutions for both the weighted and the unweighted case, running in $O(m^2 \log n)$ and $O(n^3)$ time, respectively, and using $O(m)$ and $O(n^2)$ space, respectively.
 However, there the authors assume that a BSE is an edge minimizing the stretch of the swap tree w.r.t.\ distances in the \emph{original} graph $G$, and not in the graph $G$ deprived of $e$, say $G-e$. This contrasts with the general assumption (and the intuition) that the quality of a swap tree should be evaluated in the surviving graph. 
 Hence, in \cite{BiloCG0P15} the authors resorted to such a standard setting, and provided two efficient linear-space solutions for both the weighted and the unweighted case, running in $O(m^2 \log \alpha(m,n))$ and $O(m n  \log n)$ time, respectively, and both using linear space. Notice that from a computational point of view, as shown in \cite{BiloCG0P15}, the two settings are substantially equivalent, so our solutions can be used to improve the results given in \cite{DGW10} as well.

\subsection{Our result}

In this paper, we present a new algorithm that solves the \absets problem in $O(n^2 \log^4 n)$ time and $O(n^2 + m \log^2 n)$ space. Thus, our solution improves on the running time of both the algorithms provided in \cite{BiloCG0P15}, for weighted and unweighted graphs, respectively, whenever $m = \Omega(n \log^3 n)$. Most remarkably, for dense weighted graphs, the improvement is almost quadratic in $n$.

To put into focus our result, it is worth noticing that, as observed in \cite{DGW10}, the estimation of the stretch of the swap tree induced by a \emph{single} swap edge $f$ for a given failing edge $e$, would in principle ask for the evaluation of the stretch of $O(m)$ relevant pairs of nodes in $G$, namely the endvertices of all the non-tree edges that may serve as swap edge for $e$ besides $f$. And in fact, a \emph{critical edge} for $f$ is the one whose endvertices maximize such a stretch out of these non-tree edges, and two swap edges will be essentially compared on the basis of their stretch w.r.t. their critical edge.
This is basically the reason why both previous approaches take $\Omega(m^2)$ time.
Thus, to avoid such a bottleneck, we drastically reduce, on the one hand, the number of candidate best swap edges, and on the other hand, the number of potential critical edges that need to be checked. More precisely, for each of the $n-1$ considered edges in $T$, we succeed in reducing to $O(n \log n)$ the number of best swap edge candidates, and for each one of them we just need to check $O(\log ^2 n)$ possible critical edges.
The key ingredients to reach such a goal are the following:

\begin{itemize}
\item A \emph{centroid decomposition} of $T$, which consists of a log-depth hierarchical decomposition of the vertices in $T$; a careful use of such a decomposition, combined with a set of preprocessing steps that associate various information with the tree nodes, allows us to reduce the number of candidate BSEs and of their corresponding candidate critical edges. As far as we know, this is the first time that such a decomposition is used to solve an ABSE problem, and we believe it will possibly be useful in other contexts as well.
\item The second ingredient is given by the dynamic maintenance of the \emph{upper envelopes} of a set of linear functions. Each of these functions is associated with a non-tree edge, and whenever the failure of a given tree edge is considered, it expresses the stretch such a non-tree edge  induces w.r.t. a variable candidate BSE. This way, when we have to find a critical edge for a given candidate BSE $f$, we have to select the \emph{maximum} out of all the functions once they are evaluated in $f$. In geometric terms, this translates into the maintenance of the upper envelope of a set of functions, with the additional complication that, for consistency reasons, this set of functions must be suitably partitioned into groups according to the underlying centroid decomposition, and moreover these groups are dynamic, since they depend on the currently considered tree edge.
\end{itemize}

\subsection{Related work}
The research on tree spanners is very active, also due to the strong relationship with the huge literature on \emph{spanners}, where distances in $G$ are approximately preserved through a \emph{sparse} spanning subgraph. As mentioned before, finding an optimal tree spanner is a quite hard problem. More precisely, on weighted graphs, if $G$ does not admit a tree $1$-spanner (i.e., a spanning tree with $\sigma=1$, which can be established in polynomial time \cite{CC95}), then the problem is not approximable within any constant factor better than $2$, unless \p=\np\ \cite{LW08}.
In terms of approximability, no non-trivial upper bounds are known, except for the $O(n)$-approximation factor returned by a \emph{minimum spanning tree} (MST) of $G$.
If $G$ is \emph{unweighted}, things go slightly better. More precisely, in this case the problem becomes $O(\log n)$-approximable, while unless \p=\np, the problem is not approximable within an additive term of $o(n)$ \cite{EP08}. Moreover, the corresponding decision problem of establishing whether $G$ admits a tree spanner with stretch $\sigma$ is \np-complete for every fixed $\sigma \geq 4$ (for $\sigma=2$ it is polynomial-time solvable \cite{CC95}, while for $\sigma=3$ the problem is open).
Finally, it is known that constant-stretch tree spanners can be found for several special classes of (unweighted) graphs, like strongly chordal, interval, and permutation graphs (see \cite{BCD99} and the references therein).

Concerning the problem of swapping in spanning trees, this has received a significant attention from the algorithmic community. There is indeed a line of papers that address ABSE problems starting from different types of spanning trees. Just to mention a few, besides the MST, we recall the \emph{minimum diameter spanning tree} (MDST), the \emph{minimum routing-cost spanning tree} (MRCST), and the \emph{single-source shortest-path tree} (SPT).
Concerning the MDST, a best swap is instead an edge minimizing the \emph{diameter} of the swap tree~\cite{IR98,NPW01}, and the best solution runs in $O(m \log \alpha(m,n))$ time \cite{BGP15}. Regarding the MRCST, a best swap is clearly an edge minimizing the \emph{all-to-all routing cost} of the swap tree~\cite{WHC08}, and the fastest solution for solving this problem has a running time of $O\left(m 2^{O(\alpha(n,n))}\log^2 n\right)$~\cite{BGP14}. Concerning the SPT, the most prominent swap criteria are those aiming to minimize either the maximum or the average distance from the root, and the corresponding ABSE problems can be addressed in  $O(m \log \alpha(m,n))$ time \cite{BGP15} and $O(m \, \alpha(n,n) \log^2 n)$ time \cite{DP07}, respectively. Recently, in \cite{BCGLPsirocco17}, the authors proposed two new criteria for swapping in a SPT, which are in a sense related with this paper, namely the minimization of the maximum and the average stretch factor from the root, for which they proposed an efficient $O(m n  +n^2 \log n)$ and $O(m n \log \alpha(m,n))$ time solution, respectively.

Finally, for the sake of completeness, we mention that for the related concept of \emph{average} tree $\sigma$-spanners, where the focus is on the average stretch w.r.t. all node-to-node distances, it was shown that every graph admits an average tree $O(1)$-spanner \cite{ABN07}.

\subsection{Preliminary definitions}
Let $G = ( V(G), E(G), w)$ be a $2$-edge-connected, edge-weighted, and undirected graph with cost function $w : E(G) \rightarrow \mathbb{R}^+$. We denote by $n$ and $m$ the number of vertices and edges of $G$, respectively. If $X \subseteq V(G)$, let $E(X)$ be the set of edges incident to at least one vertex in $X$. When $X=\{ v \}$, we may write $E(v)$ instead of $E(\{ v \})$. Given an edge $e \in E(G)$, we will denote by $G-e$ the graph obtained from $G$ by removing edge $e$. Similarly, given a vertex $v \in V(G)$, we will denote by $G-v$ the graph obtained from $G$ by removing vertex $v$ and all its incident edges.
Given an edge $e \in E(T)$, we let $S(e)$ be the set of all the \emph{swap edges} for $e$, i.e., all edges in $E(G) \setminus \{ e \}$ whose endpoints lie in two different connected components of $T-e$. We also define $S(e, X)=S(e) \cap E(X)$, and $S(e, X, Y)=S(e) \cap E(X) \cap E(Y)$. When $X=\{ v \}$, we will simply write $S(e,v)$ in lieu of $S(e,\{v\})$.
For any $e \in E(T)$ and $f \in S(e)$, let $T_{e/f}$ denote the \textit{swap tree} obtained from $T$ by replacing $e$ with $f$.

Given two vertices $x,y \in V(G)$, we denote by $d_G(x,y)$ the \emph{distance} between $x$ and $y$ in $G$. We define the \textit{stretch factor} of the pair $(x,y)$ w.r.t. $G$ and $T$ as $\sigma_G(T, x, y) = \frac{d_T(x,y)}{d_G(x,y)}$. Accordingly, the stretch factor $\sigma_G(T)$ of $T$ w.r.t. $G$ is defined as $\sigma_G(T) = \max_{x,y \in V(G)} \sigma_G(T, x, y)$.

\begin{definition}[Best Swap Edge]
	An edge $f^* \in S(e)$ is a \textit{best swap edge} (BSE) for $e$ if $f^* \in \arg\min_{f \in S(e)} \sigma_{G-e}(T_{e/f})$.
\end{definition}

In the sequel, in order to solve the \absets problem, we will show how to efficiently find a BSE for every edge $e$ of a tree spanner $T$ of $G$.

\section{High-level description of the algorithm}

It is useful to consider the tree $T$ as rooted at any fixed vertex, and to assume, w.l.o.g., that $T$ is binary. Indeed, if $T$ is not binary, then it is possible, by using standard techniques, to transform $G$ and $T$ into an equivalent graph $G^\prime$ and a corresponding binary spanning tree $T^\prime$, with $|V(G^\prime)| = \Theta(n)$ and $|E(G^\prime)| = \Theta(m)$, and such that a BSE for any edge of $T$ is univocally associated with a BSE for a corresponding edge of $T'$. This transformation requires linear time and it is sketched in Appendix~\ref{sec:degree_reduction} for the sake of completeness.

As a preprocessing step, we compute a \emph{centroid decomposition} of $T$.
A \emph{centroid} of an $n$-vertex tree is a vertex whose removal splits $T$ into subtrees of size at most $n/2$ \cite{jordan1869assemblages}.
A centroid decomposition of $T$ can be computed in $O(n \log n)$ time, and can be represented by a tree $\T$ of height $O(\log n)$, whose nodes are actually subtrees of $T$.
$\T$ is recursively defined as follows: the root of $\T$ is $T$. Then, let $\tau$ be a node of $\T$ (i.e., a subtree of $T$) such that $\tau$ contains more than one vertex, and let $c$ be a centroid of $\tau$. Since $T$ is binary, the forest $\tau-c$ contains at most $3$ trees, that we call $\tau_c^1$, $\tau_c^2$, and $\tau_c^3$ (if $\tau-c$ generates less than $3$ subtrees, we allow some $\tau_c^i$ to be the empty tree).
Moreover, let $\tau_c^0$ be the subtree of $T$ containing the sole vertex $c$. Then, $\tau$ will have in $\T$ a child for each of the subtrees $\tau_c^i, i=0,\ldots,3$ (see Figure~\ref{fig:bce} (a)).
Since a centroid on a $n$-vertex tree can be found in linear time, the whole procedure requires $O(n \log n)$ time, and it is easy to see that the height of $\T$ is $O(\log n)$.

Our solution (see Algorithm~\absets) works in $n-1$ phases, one for each tree edge as considered in preorder w.r.t. $T$, and at the end of each phase returns a BSE for that edge. 
Let $e \in E(T)$ be the currently considered edge, and let $U_e$ (resp. $D_e$) be the set of vertices that belong to the connected component of $T-e$ that contains (resp. does not contain) the root of $T$. We break down each of these phases into $O(n)$ additional sub-phases: when edge $e$ is failing, we consider all the vertices in $U_e$ and, for each such vertex $v$, we solve a restricted version of the \absets problem
where we compute: (i) a \emph{$v$-restricted best swap edge} ($v$-BSE for short), i.e., an edge $f \in \arg\min_{f \in S(e,v)} \sigma_{G-e}(T_{e/f})$, and (ii) the corresponding stretch factor $\sigma_{G-e}(T_{e/f})$.  To simplify handling of special cases, whenever $S(e, v) = \emptyset$, we assume that $f=\bot$ and that $\sigma_{G-e}(T_{e/f}) = +\infty$. As we will see in the rest of the paper, the core of our algorithm is exactly the efficient computation of these $v$-BSEs  and of their stretch factors. This is done trough a clever selection of a small set of \emph{candidate} $v$-BSEs, as we will discuss in more detail in the next section.
Once all the $v$-BSEs for $e$ are computed, a BSE for $e$ can be found as the one minimizing the associated stretch factor.

\begin{algorithm}[t]{\caption{ABSE-TS($G$, $T$)}}
	\label{alg:absets}
\footnotesize{	$\T \gets$ Centroid decomposition of $T$\;
	\ForEach(\tcp*[f]{$n-1$ phases}){$e \in E(T)$ in postorder}
	{
		$U_e \gets $ vertices of the component of $T-e$ that contains the root of $T$\;
		
		$f^* \gets \bot$\tcp*{Current BSE for $e$}
		\ForEach(\tcp*[f]{$O(n)$ sub-phases}){$v \in U_e$}
		{
			\textbf{compute} a $v$-BSE $f$ for $e$ and the corresponding stretch factor\tcp*{This takes $O(\log^4 n)$ time by using $\T$ and the dynamic maintenance of the upper envelopes associated with the swap edges, as shown in Section \ref{sec:BSE}}
			
			\lIf{ $\sigma_{G-e}(T_{e/f}) < \sigma_{G-e}(T_{e/f^*})$ }
			{$f^* \gets f$}
		}
		\Return {\textup{$f^*$ as BSE for $e$ and \textbf{continue} with the next phase.}}

	}}
\end{algorithm}

\section{Computing efficiently a \texorpdfstring{$v$-BSE}{v-BSE}}
\label{sec:BSE}
To show how a $v$-BSE for $e$ can be computed efficiently, we need some preliminary definitions:

\begin{definition}[Critical Edge]
Given $e \in E(T)$ and a swap edge $f = (v, u) \in S(e, v)$, a \emph{critical edge}\footnote{Notice that this definition does not contain $d_{G-e}(x,y)$ at the denominator, as expected, since it already incorporates the property stated in the forthcoming Proposition \ref{prop:min_bse}.} for $f$ is an edge $g = (x,y) \in S(e)$ maximizing
$ \displaystyle
	\phi(f, g) := \frac{d_T(x, v) + w(f) + d_T(u, y)}{w(g)}.
$
\end{definition}

\begin{definition}[Best Cut Edge]
	A $v$-\emph{best cut edge} for $e$ ($v$-BCE) is an edge $f \in S(e,v)$ minimizing $\varphi_e(f)=\max_{g \in S(e)} \phi(f, g)$.
\end{definition}

Then, we will make use of the following property, which was given in \cite{BiloCG0P15}:
\begin{proposition}
\label{prop:min_bse}
 Every $v$-BCE for $e$ is a $v$-BSE for $e$.
\end{proposition}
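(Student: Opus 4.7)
The plan is to establish the identity
\[
\sigma_{G-e}(T_{e/f}) \;=\; \max\bigl(\Sigma,\; \varphi_e(f)\bigr),
\]
where $\Sigma$ depends only on $G$, $T$, and $e$, and not on the swap edge $f$. From this the proposition is immediate: if $f^{*}\in S(e,v)$ is a $v$-BCE, then $\varphi_e(f^{*})\le \varphi_e(f)$ for every $f\in S(e,v)$, whence
\[
\sigma_{G-e}(T_{e/f^{*}})\;=\;\max\bigl(\Sigma,\varphi_e(f^{*})\bigr)\;\le\; \max\bigl(\Sigma,\varphi_e(f)\bigr)\;=\;\sigma_{G-e}(T_{e/f}),
\]
so $f^{*}$ is a $v$-BSE as well.

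First I would collapse the maximum in $\sigma_{G-e}(T_{e/f})=\max_{x,y}\tfrac{d_{T_{e/f}}(x,y)}{d_{G-e}(x,y)}$ to a maximum taken only over pairs that are endpoints of a single edge of $G-e$. Writing a shortest $x$-$y$ path in $G-e$ as $x=v_0,v_1,\dots,v_k=y$, the triangle inequality in $T_{e/f}$ gives $d_{T_{e/f}}(x,y)\le\sum_i d_{T_{e/f}}(v_i,v_{i+1})$ while $d_{G-e}(x,y)=\sum_i w(v_i,v_{i+1})$; combining this with the mediant inequality $\tfrac{\sum_i\alpha_i}{\sum_i\beta_i}\le \max_i\tfrac{\alpha_i}{\beta_i}$ and with the matching lower bound obtained by plugging a single edge of $G-e$ in as the test pair (using $d_{G-e}(a,b)\le w(a,b)$) yields
\[
\sigma_{G-e}(T_{e/f}) \;=\; \max_{(a,b)\in E(G-e)} \frac{d_{T_{e/f}}(a,b)}{w(a,b)}.
\]

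Next I would do a three-way case analysis on the adjacent pair $(a,b)$. If $(a,b)\in E(T)\setminus\{e\}$, the edge still lies in $T_{e/f}$ and the ratio equals $1$. If $(a,b)$ is a non-tree edge whose endpoints lie on the same side of the $e$-cut of $T$, then $d_{T_{e/f}}(a,b)=d_T(a,b)$, so the ratio $\tfrac{d_T(a,b)}{w(a,b)}$ is visibly independent of $f$. If instead $(a,b)\in S(e)$, writing $f=(v,u)$ with $v$ on the $a$-side of $e$ and $u$ on the $b$-side, the unique $a$-$b$ path in $T_{e/f}$ must traverse $f$, so $d_{T_{e/f}}(a,b)=d_T(a,v)+w(f)+d_T(u,b)$ and the ratio equals exactly $\phi(f,(a,b))$. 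Taking $\Sigma$ to be the maximum of the ratios produced by the first two cases makes the maximum over the third equal to $\varphi_e(f)$, and the identity drops out.

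The one delicate point I expect is the edge-reduction step: the mediant inequality directly gives only an upper bound on $\sigma_{G-e}(T_{e/f})$, and to upgrade it to equality I will rely on the fact that every edge $(a,b)\in E(G-e)$ is itself a legitimate test pair whose plain pairwise stretch $\tfrac{d_{T_{e/f}}(a,b)}{d_{G-e}(a,b)}$ dominates the edge-form ratio $\tfrac{d_{T_{e/f}}(a,b)}{w(a,b)}$ via $d_{G-e}(a,b)\le w(a,b)$. Everything else is routine bookkeeping; the only other care required is to orient the endpoints of $g\in S(e)$ correctly against $f=(v,u)$ in the definition of $\phi$.
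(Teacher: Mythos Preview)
Your argument is correct. The key identity $\sigma_{G-e}(T_{e/f})=\max_{(a,b)\in E(G-e)} d_{T_{e/f}}(a,b)/w(a,b)$ follows exactly as you outline (the mediant inequality gives the upper bound, and the single-edge test pair together with $d_{G-e}(a,b)\le w(a,b)$ gives the matching lower bound), and the three-way split of edges into tree edges, non-swap non-tree edges, and swap edges cleanly separates the $f$-independent part $\Sigma$ from $\varphi_e(f)$. The monotonicity of $\max(\Sigma,\cdot)$ then finishes the job.

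As for comparison with the paper: this proposition is not proved here at all but is imported from~\cite{BiloCG0P15} (the authors write ``we will make use of the following property, which was given in~\cite{BiloCG0P15}''). Your write-up is essentially the standard proof of this fact---reducing the global stretch to edge stretches and observing that only the swap edges produce an $f$-dependent term---and is precisely the argument one would expect to find in the cited source. The only cosmetic remark is that, since $f\in S(e)$ and $\phi(f,f)=1$, one always has $\varphi_e(f)\ge 1$, so the contribution of tree edges (ratio~$1$) is already absorbed into $\varphi_e(f)$ and need not be listed separately in $\Sigma$; but this does not affect the argument.
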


Let us first provide a high-level description of how we compute a $v$-BCE (i.e., a $v$-BSE) for $e$. The algorithm will compute $O(\log n)$ $v$-BCE \emph{candidates}, the best of which will be a $v$-BCE for $e$. Informally speaking, each candidate $f$ will be a swap edge close to the centroid of a certain subtree $\Lambda$ of $T$. Depending on the position of a critical edge for $f$, the algorithm will recurse on a subtree of $\Lambda$ and it will look for the next candidate. Thanks to the centroid decomposition of $T$, the number of recursions/candidates will then be $O(\log n)$.

\begin{figure}[t]
	\centering
	\includegraphics[scale=0.73]{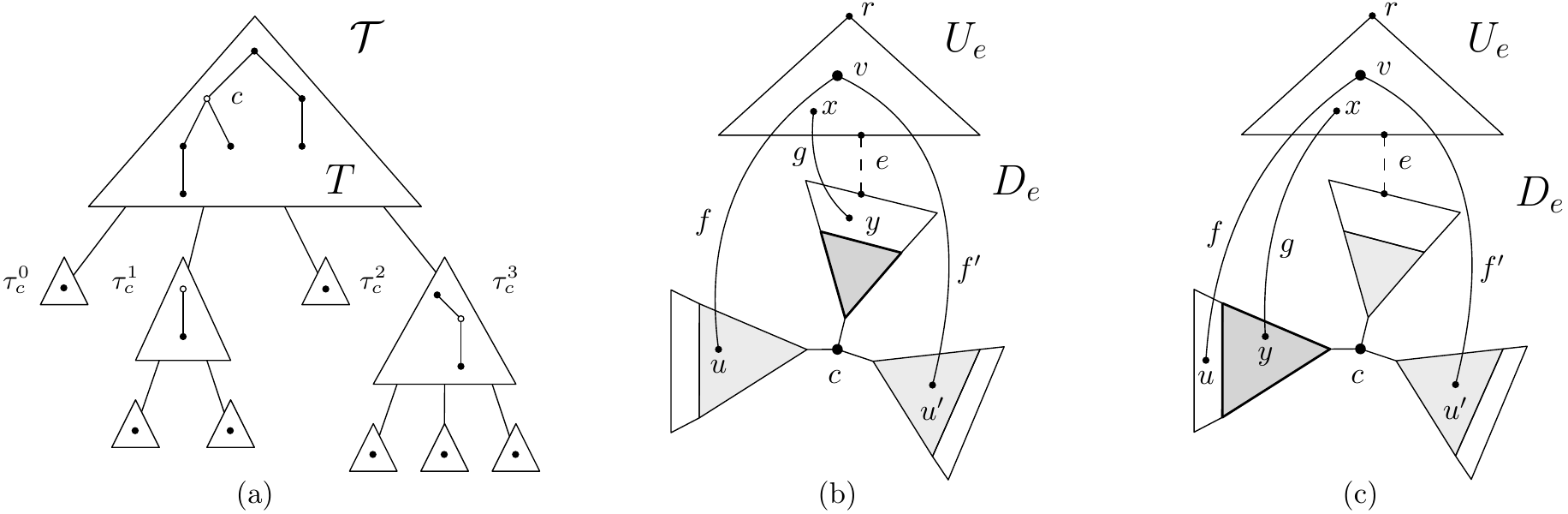}
	\caption{(a) An example of centroid decomposition of the tree $T$ (which corresponds to the first vertex of $\T$).
	(b) and (c): Two of the four possible cases situation illustrated in Lemma~\ref{lemma:BCE_position}.
	The subtree $\widehat{T}$ is represented by the three gray triangles along with the vertex $c$. $f$ is a swap edge for $e$ that minimizes $w(f) + d_T(u, c)$, and $g$ is its corresponding critical edge. The $(c,y)$-\emph{tree} of $\widehat{T}$ is drawn in bold. Notice that $f$ and $g$ do not need to be incident to $\widehat{T}$.}
	\label{fig:bce}
\end{figure}

The key ingredient for the correctness of our algorithm is the next lemma.
Given a subtree  $\widehat{T}$ of $T$, a vertex $c \in V(\widehat{T})$, and a vertex $y \in V(T)$,
	consider the first vertex $z$ of the unique path from $y$ to $c$ in $T$ that also belongs to $V(\widehat{T})$. 	
	The $(c,y)-$\emph{tree} of $\widehat{T}$ is defined as follows: (1) if $z=c$, then it is the empty tree; otherwise (2) it is the tree of the forest $\widehat{T}-c$ that contains $z$. Then, the following holds (see also Figure \ref{fig:bce}~(b)~and~(c)):
	
\begin{lemma}
	\label{lemma:BCE_position}
Let $\widehat{T}$ be a subtree of $T$ such that $V(\widehat{T}) \subseteq D_e$,
and let $c \in V(\widehat{T})$. Moreover, let $f =(v,u) \in S(e, v)$ be a swap edge for $e$ that minimizes $w(f) + d_T(u, c)$, and let $g=(x,y)$ be a critical edge for $f$.
Assume that $S(e,v, V(\widehat{T}) )$ contains a $v$-BCE for $e$.
If $f$ is not a $v$-BCE for $e$, then $S(e,v, V(T') )$ contains a $v$-BCE for $e$, where $T'$ is the
$(c,y)$-tree of $\widehat{T}$.
\end{lemma}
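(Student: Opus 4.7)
The plan is to argue by contradiction. Suppose $f^{*}=(v,u^{*})$ is a $v$-BCE lying in $S(e,v,V(\widehat{T}))$, and assume toward a contradiction that $u^{*}\notin V(T')$. I then aim to show $\varphi_e(f)\le \varphi_e(f^{*})$, which, since $f^{*}$ minimizes $\varphi_e$ over $S(e,v)$, forces $f$ itself to be a $v$-BCE, contradicting the hypothesis. The bound on $\varphi_e(f)$ will be derived by evaluating it on the critical edge $g=(x,y)$, applying the triangle inequality in $T$ centered at $c$, and then invoking the optimality of $f$ with respect to $w(f)+d_T(u,c)$.

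The main geometric step, and in my view the real crux, is the claim that $c$ lies on the unique $T$-path from $u^{*}$ to $y$, so that $d_T(u^{*},y)=d_T(u^{*},c)+d_T(c,y)$. If $u^{*}=c$ this is trivial. Otherwise $u^{*}$ belongs to some component $T''$ of $\widehat{T}-c$ with $T''\neq T'$. By definition of $z$, every vertex on the $T$-path from $y$ to $c$ strictly before $z$ lies outside $V(\widehat{T})$; thus the sub-path from $z$ to $y$ meets $\widehat{T}$ only at $z$. Concatenating the $\widehat{T}$-paths $u^{*}\!\to\! c$ (through $T''$) and $c\!\to\! z$ (through $T'$) with the $T$-path $z\!\to\! y$ (outside $\widehat{T}$) produces a simple walk in $T$; by uniqueness of paths in a tree this walk is the $T$-path from $u^{*}$ to $y$, so $c$ lies on it. Note that $u,u^{*},c\in D_e$, so none of the invoked $T$-distances use the failing edge $e$.

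With that identity in hand, the rest is a short chain of inequalities. Starting from the critical-edge characterization
\[
\varphi_e(f)\;=\;\phi(f,g)\;=\;\frac{d_T(x,v)+w(f)+d_T(u,y)}{w(g)},
\]
I apply $d_T(u,y)\le d_T(u,c)+d_T(c,y)$ to get a numerator bounded by $d_T(x,v)+\bigl(w(f)+d_T(u,c)\bigr)+d_T(c,y)$. Since $f$ minimizes $w(f')+d_T(u',c)$ over $S(e,v)$, I can replace $w(f)+d_T(u,c)$ by the larger $w(f^{*})+d_T(u^{*},c)$, and then use the identity from the previous paragraph to re-collapse $d_T(u^{*},c)+d_T(c,y)=d_T(u^{*},y)$. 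This yields $\varphi_e(f)\le \phi(f^{*},g)\le \varphi_e(f^{*})$, completing the contradiction. In particular, when $V(T')=\emptyset$ (the case $z=c$) the same inequalities apply to every $v$-BCE candidate in $V(\widehat{T})$, so the implication of the lemma holds vacuously via its contrapositive: $f$ must itself be a $v$-BCE.
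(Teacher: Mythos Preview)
Your proof is correct and follows essentially the same approach as the paper: both argue that any swap edge $f'=(v,u')$ with $u'\in V(\widehat{T})\setminus V(T')$ satisfies $\varphi_e(f')\ge\varphi_e(f)$ via the same chain of inequalities (you simply read the chain in the opposite direction, with your $f^*$ playing the role of the paper's $f'$). Your treatment is in fact slightly more careful than the paper's: you explicitly restrict $u^*$ to $V(\widehat{T})$ (which is all the lemma requires) and you justify in detail why $c$ lies on the $T$-path from $u^*$ to $y$, whereas the paper asserts this in one line.
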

\begin{proof}
	Suppose that $f$ is not a $v$-BCE for $e$, we will show that no swap edge $f^\prime = (v, u^\prime) \in S(e,v)$ with $u^\prime \not\in V(T')$ can be a $v$-BCE for $e$.
	Indeed:		
	\begin{align*}
		\hspace{-2.7pt} \varphi(f^\prime)
		 & \ge \phi(f^\prime, g)
		 = \frac{d_T(x, v) + w(f^\prime) + d_T(u^\prime, y)}{w(g)}
		 = \frac{d_T(x, v) + w(f^\prime) + d_T(u^\prime, c) + d_T(c, y)}{w(g)} \\
		 & \ge \frac{d_T(x, v) + w(f) + d_T(u, c) + d_T(c, y)}{w(g)}
		 \ge \phi(f, g) = \varphi(f),
	\end{align*}
	\noindent where we used the fact that $d_T(u^\prime, y) = d_T(u^\prime, c) + d_T(c, y)$ as either $u^\prime=c$ or $u^\prime$ and $y$ are in two different connected components of $T-c$.
\end{proof}

Lemma~\ref{lemma:BCE_position} allows us to design a recursive algorithm for computing a $v$-BCE for $e$, whose key steps are highlighted in Procedure~\texttt{\ref{alg:find_BCE}} (notice that $v$ and $e$ are fixed). More precisely, the algorithm takes a tree $\Lambda$ of the centroid decomposition $\T$ such that $V(\Lambda) \cap D_e \neq \emptyset$, and it computes a pair $(f^*,g^*)$ such that if $S(e, v, V(\Lambda) \cap D_e)$ contains a $v$-BCE for $e$, then $f^*$ is a $v$-BCE for $e$, and $g^*$ is its critical edge. Procedure~\texttt{\ref{alg:find_BCE}} makes use of an additional function \texttt{\ref{alg:find_critical}}($f, T$) that returns a critical edge for $f$ w.r.t. the failure of $e$. The initial call will be \texttt{FindBCE}($T$). In order to handle base cases, we assume $\phi(\bot,\bot)=+ \infty$.

\begin{procedure}[t]{\caption{FindBCE($\Lambda$)}}
	\label{alg:find_BCE}
	\footnotesize{	\lIf{$|V(\Lambda)| = 0$}{ \Return $(\bot, \bot)$ }

	$c \gets $ Centroid of $\Lambda$\;

	\If{$c \in U_e$}
	{
		$\tau \gets$  unique child of $\Lambda$ in $\T$ that contains all the vertices in $V(\Lambda) \cap D_e$\label{ln:tau}\;
		\Return \FindBCE($\tau$)\;
	}	
	\Else(\tcp*[f]{$c \in D_e$})
	{
		Compute an edge $ f = (v, u) \in \arg \min_{(v,u) \in S(e,v)} \{ w(v,u) + d_T(u, c) \}$\label{ln:find_f}\tcp*{See Appendix~\ref{sec:step 7}}
		$g_1 = (x,y) \gets $ \texttt{FindCritical}($f, T$)\label{ln:find_critical}\tcp*{Compute a critical edge for $f$ (see Sec.~\ref{sec:critical})}
	
		$\tau \gets (c,y)$-tree of $\Lambda$\tcp*{Either $\tau$ is empty or it is a child of $\Lambda$ in $\T$}		
		$(f^\prime, g_2) \gets $ \FindBCE($ \tau$)\;

		\leIf{$\phi(f, g_1 ) \le \phi(f', g_2 )$}{\Return $(f, g_1)$;}{\Return $(f^\prime, g_2)$}
	}}
\end{procedure}

We now prove the correctness of the procedure:
\begin{lemma}
	Procedure~\texttt{\ref{alg:find_BCE}}$(T)$ computes a $v$-BCE for $e$.
\end{lemma}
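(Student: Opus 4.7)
The plan is to prove, by induction on $|V(\Lambda)|$, a strengthened invariant: Procedure~\texttt{\ref{alg:find_BCE}}$(\Lambda)$ returns a pair $(f^*, g^*)$ such that (i) $g^*$ is a critical edge for $f^*$, so that $\phi(f^*, g^*) = \varphi(f^*)$ (with the convention $\phi(\bot,\bot)=+\infty$), and (ii) whenever $S(e, v, V(\Lambda) \cap D_e)$ contains a $v$-BCE, the returned $f^*$ is itself a $v$-BCE for $e$. The lemma follows by applying the invariant to $\Lambda = T$: since every swap edge in $S(e,v)$ has its non-$v$ endpoint in $D_e$, we have $S(e, v, D_e) = S(e, v)$, which contains a $v$-BCE (or is empty, in which case the convention $f^* = \bot$ suffices). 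The base case $|V(\Lambda)| = 0$ is immediate.

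For the inductive step, let $c$ be the centroid of $\Lambda$. If $c \in U_e$, then $V(\Lambda) \cap D_e$ is the intersection of two subtrees of $T$, hence a (possibly empty) connected subtree of $T$ that avoids $c$; it therefore lies inside exactly one component of $\Lambda - c$, which is the child $\tau$ chosen in line~\ref{ln:tau}. Since $V(\tau) \cap D_e = V(\Lambda) \cap D_e$, the inductive invariant for $\tau$ immediately yields the invariant for $\Lambda$.

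The interesting case is $c \in D_e$, where I would apply Lemma~\ref{lemma:BCE_position} with $\widehat{T}$ taken as the subtree of $T$ induced by $V(\Lambda) \cap D_e$, which is nonempty because it contains $c$; the edge $f$ computed in line~\ref{ln:find_f} is exactly the minimizer required by the lemma. If $f$ is already a $v$-BCE, then, using (i) on the recursive call and setting $\varphi^* := \min_{h \in S(e,v)} \varphi(h)$, we get $\phi(f,g_1) = \varphi(f) = \varphi^* \le \varphi(f') = \phi(f', g_2)$, so the algorithm returns $(f, g_1)$. Otherwise Lemma~\ref{lemma:BCE_position} produces a $v$-BCE in $S(e, v, V(T'))$, where $T'$ is the $(c,y)$-tree of $\widehat{T}$, while the algorithm instead recurses on $\tau$, the $(c,y)$-tree of $\Lambda$.

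The main obstacle is that $T'$ and $\tau$ are defined with respect to different subtrees and may a priori differ, so I need to show $V(T') \subseteq V(\tau) \cap D_e$ in order to invoke the inductive hypothesis on $\tau$. Let $z$ and $z'$ be the first vertices of the $y$-to-$c$ path in $T$ lying in $V(\widehat{T})$ and in $V(\Lambda)$, respectively; the assumption $T' \neq \emptyset$ forces $z \neq c$, and since $V(\widehat{T}) \subseteq V(\Lambda)$ we get $z' \neq c$ and $z'$ lies on the sub-path from $y$ to $z$. By convexity of $\Lambda$ as a subtree of $T$, the sub-path from $z'$ to $c$ ---which visits $z$ before reaching $c$--- stays inside $V(\Lambda)$, so $z$ and $z'$ lie in the same component of $\Lambda - c$, namely $V(\tau)$; since $\widehat{T} - c$ is a subgraph of $\Lambda - c$, the component of $\widehat{T} - c$ containing $z$ (which equals $V(T')$) is contained in $V(\tau) \cap D_e$, as required. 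The inductive hypothesis then yields $f'$ as a $v$-BCE with $\phi(f', g_2) = \varphi^*$, and since we are in the sub-case where $\varphi(f) > \varphi^*$, we have $\phi(f, g_1) > \phi(f', g_2)$, so the algorithm correctly returns $(f', g_2)$.
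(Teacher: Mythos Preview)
Your proof is correct and follows essentially the same inductive scheme as the paper: both argue by induction on $|V(\Lambda)|$, split on whether the centroid $c$ lies in $U_e$ or $D_e$, and in the latter case invoke Lemma~\ref{lemma:BCE_position} with $\widehat{T}$ the subtree induced by $V(\Lambda)\cap D_e$. Your version is in fact more careful than the paper's in two respects: you separate the invariant into an unconditional part (i) (that $g^*$ is always critical for $f^*$), which is needed to compare $\phi(f,g_1)$ with $\phi(f',g_2)$ even when the recursive hypothesis for $\tau$ is not known to hold, and you explicitly verify that $V(T')\subseteq V(\tau)\cap D_e$, bridging the gap between the $(c,y)$-tree of $\widehat{T}$ produced by Lemma~\ref{lemma:BCE_position} and the $(c,y)$-tree of $\Lambda$ on which the procedure actually recurses---both points the paper's proof leaves implicit.
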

\begin{proof}
	Consider an invocation of the procedure and let $\Lambda$ and $(f^*, g^*)$ be its parameter and the edges it returns, respectively.
	We prove the following claim by induction on the cardinality of $V(\Lambda)$: if $S(e,v,V(\Lambda) \cap D_e)$ contains a $v$-BCE for $e$, then $f^*$ is a $v$-BCE for $e$ and $g^*$ is a critical edge for $f^*$.

	If $|V(\Lambda)|=0$, then the claim trivially holds.
	Otherwise, $|V(\Lambda)| > 0$, and we distinguish two cases depending on the position of the centroid $c$ of $\Lambda$.
	If $c \in U_e$, then there is only one child $\tau_c^j$ of $\Lambda$ in $\T$ that contains all the vertices in $V(\Lambda) \cap D_e$, as otherwise the vertices in $D_e$ would be disconnected in $\Lambda$. Hence, if $S(e,v,V(\Lambda) \cap D_e)$ contains a $v$-BCE for $e$, then $S(e,v,V(\tau_c^j) \cap D_e)$ also contains a $v$-BCE for $e$, and the claim follows by the inductive hypothesis (as $|V(\tau_c^j)| < |V(\Lambda)|$).
	The remaining case is the one in which $c \in D_e$, here the claim follows from Lemma~\ref{lemma:BCE_position} (where now $\widehat{T}$ is the subtree of $T$ induced by $V(\Lambda) \cap D_e$) together with the inductive hypothesis.
\end{proof}

Next lemma provides an upper bound to the running time of the procedure:
\begin{lemma}\label{lemma:FindBCE}
	Procedure~\texttt{\ref{alg:find_BCE}}$(T)$ requires $O( (\Gamma_f+\Gamma_{\emph{\texttt{FC}}}) \log n)$ time, where $\Gamma_f$ and $\Gamma_{\emph{\texttt{FC}}}$ is the time required to perform Steps \ref{ln:find_f} and \ref{ln:find_critical}, i.e., the time to find edge $f$, and to execute Procedure~\texttt{\ref{alg:find_critical}}, respectively.
\end{lemma}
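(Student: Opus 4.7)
The plan is to bound the recursion depth and the non-recursive work per invocation, then multiply.

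First, I would observe that each call to \texttt{FindBCE}$(\Lambda)$ is either a base case ($V(\Lambda)=\emptyset$) or it makes exactly one recursive call, on a tree $\tau$ that is a child of $\Lambda$ in the centroid decomposition $\T$ (or empty). Indeed, in the branch $c\in U_e$ the procedure explicitly takes $\tau$ to be the unique child of $\Lambda$ in $\T$ containing $V(\Lambda)\cap D_e$; in the branch $c\in D_e$, $\tau$ is the $(c,y)$-tree of $\Lambda$, which by the very construction of $\T$ (the children of $\Lambda$ in $\T$ are exactly $\tau_c^0,\ldots,\tau_c^3$, the singleton $\{c\}$ together with the at-most-three components of $\Lambda-c$) is either empty or one of these children. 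Hence the sequence of invocations forms a strictly descending chain in $\T$, and since $\T$ has height $O(\log n)$, the total number of invocations is $O(\log n)$.

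Second, I would bound the work done within a single invocation, exclusive of the recursive call. After a linear-time preprocessing of $\T$ we may assume that the centroid $c$ of any $\Lambda\in V(\T)$, as well as the tests ``$c\in U_e$'' and ``$c\in D_e$'' (using the preorder numbering of $T$ relative to the currently failing edge $e$), take $O(1)$ time. In the $c\in U_e$ branch, locating the unique child $\tau$ that contains $V(\Lambda)\cap D_e$ can be done in $O(1)$ after precomputing, for every vertex $u$ of $T$ and every ancestor $\Lambda$ of $u$ in $\T$, the child of $\Lambda$ in $\T$ that contains $u$ (this is an $O(n\log n)$-size table built once). In the $c\in D_e$ branch, the dominant costs are Step~\ref{ln:find_f}, contributing $\Gamma_f$, and Step~\ref{ln:find_critical}, contributing $\Gamma_{\texttt{FC}}$; all remaining operations (reading the endpoint $y$ of the returned critical edge, looking up the $(c,y)$-tree via the same table, and comparing the two $\phi$ values) take $O(1)$ time each, given that weights $w(f)$, $w(g)$ and tree distances $d_T(\cdot,\cdot)$ used inside $\phi$ are available in $O(1)$ after standard LCA preprocessing of $T$.

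Combining the two bounds, over the $O(\log n)$ invocations in a chain, each one contributes at most $O(\Gamma_f+\Gamma_{\texttt{FC}})$ work, yielding the claimed running time $O\bigl((\Gamma_f+\Gamma_{\texttt{FC}})\log n\bigr)$.

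The main obstacle I foresee is being precise about the constant-time identification of the appropriate child $\tau$ of $\Lambda$ in $\T$ (both for the ``unique child containing $V(\Lambda)\cap D_e$'' and for the ``$(c,y)$-tree'' cases). Both reduce to the same primitive: given a vertex $u\in V(T)$ and a node $\Lambda\in V(\T)$ that is an ancestor of $u$'s leaf in $\T$, return the child of $\Lambda$ that contains $u$. This is handled by a one-shot preprocessing of the ancestor chain of every vertex in $\T$, whose total size is $O(n\log n)$ and which fits comfortably within the global space budget of the algorithm. With this primitive in place the bookkeeping in each call is $O(1)$, and the analysis reduces to the straightforward multiplication above.
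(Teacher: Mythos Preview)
Your proposal is correct and takes essentially the same approach as the paper: bound the recursion depth by the $O(\log n)$ height of $\T$ (each recursive call descends to a child of $\Lambda$ in $\T$) and argue that each invocation does $O(\Gamma_f+\Gamma_{\texttt{FC}})$ non-recursive work. The only difference is in the bookkeeping for Step~\ref{ln:tau}: the paper marks, once per failing edge $e$, the $O(\log n)$ nodes of $\T$ on the root-to-leaf path ending at the lower endpoint of $e$, whereas you precompute a global $O(n\log n)$ table; both work. One small caveat on your $(c,y)$-tree lookup: your table primitive requires $y\in V(\Lambda)$, which need not hold since the critical edge endpoint $y$ may lie outside $\Lambda$; this is easily fixed by instead determining (via an $O(1)$ LCA-type query) which of the at most three neighbors of $c$ in $T$ lies on the $T$-path from $y$ to $c$ and checking whether that neighbor belongs to $V(\Lambda)$. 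The paper's proof does not spell out this step either.
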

\begin{proof}
First of all, notice that Step \ref{ln:tau} can be performed in $O(1)$ time, after a $O(\log n)$ preprocessing time in which we mark all the nodes of $\T$ on the path between the leaf of $\T$ containing the lower vertex of $e$ (which clearly belongs to $D_e$) and the root of $\T$. Then, we only need to bound the depth of the recursion of the call \texttt{\ref{alg:find_BCE}}$(T)$.
Observe that each time Procedure~\texttt{\ref{alg:find_BCE}}($\Lambda$) recursively invokes itself on a tree $\Lambda'$, we have that $\Lambda'$ is a child of $\Lambda$ in $\T$.
The claim follows since the height of $\T$ is $O(\log n)$.
\end{proof}

Actually, the time to execute Step \ref{ln:find_f} is $O(\log n)$, after a preprocessing time and space of $O(n^2)$, by making use of \emph{top-trees} \cite{AHLT05}. Due to space limitations, this result is postponed to the appendix. On the other hand, Procedure~\texttt{\ref{alg:find_critical}} will require $O(\log ^3 n)$ time and $O(m \log ^2 n)$ space, as we will show in the next two subsections.

\subsection{Computing a critical edge for \texorpdfstring{$f$}{f}}
\label{sec:critical}
We will compute $O(\log^2 n)$ critical edge candidates for $f$ and we will show that a critical edge for $f$ will be one of them. More precisely, we look at $O(\log n)$ subtrees of the centroid decomposition $\T$ and, for each such subtree $\Lambda$, we will consider $O(\log n)$ subtrees $\Psi$ to find a critical edge candidate having one endpoint in $\Psi$ and the other in $\Lambda$. The choice of the $O(\log^2 n)$ pairs of trees is guided by the position of $f$, while the computation of a candidate for a given pair $(\Psi, \Lambda)$ is the core of the procedure and is described in the next subsection.

\begin{definition}[$(\Psi,\Lambda)$-Critical Edge]
Given a failing edge $e$ and a swap edge $f = (v, u) \in S(e, v)$, and given two trees $\Psi, \Lambda$ of the centroid decomposition $\T$, a $(\Psi, \Lambda)$-\emph{critical edge} for $f$ is an edge $g =(x, y) \in \arg \max_{g' \in  S(e, V(\Psi) \cap U_e, V(\Lambda) \cap D_e)} \phi(f, g')$. When $\Psi=T$ we will refer to a $(\Psi,\Lambda)$-critical edge as a $\Lambda$-critical edge.
\end{definition}

Let $f=(v,u) \in S(e, v)$ and let $\Lambda$ be a tree of the centroid decomposition $\T$ such that $u \in V(\Lambda)$. Procedure~\texttt{\ref{alg:find_critical}} returns a $\Lambda$-critical edge for $f$, when edge $e$ fails (such an edge always exists as $f$ has one endpoint in $U_e$ and the other in $V(\Lambda) \cap D_e$). Notice that the call ~\texttt{\ref{alg:find_critical}}$(f, T)$ in Procedure~\texttt{\ref{alg:find_BCE}} computes a critical edge for $f$, since a $T$-critical edge for $f$ is actually a critical edge for $f$.

Procedure~\texttt{\ref{alg:find_critical}} uses as a subroutine Procedure~\texttt{\ref{alg:find_critical_top}}($f$, $\Psi$, $\Lambda$), which for the sake of clarity will be described in the next subsection. For the moment, it suffices to know that \texttt{\ref{alg:find_critical_top}} receives three inputs, i.e., edge $f=(v,u)$ and two subtrees $\Psi, \Lambda$ of the centroid decomposition $\T$ such that $v \in \Psi$ and, either $u \not \in V(\Lambda)$ or $\Lambda$ is the tree containing the sole vertex $u$, and it returns a $(\Psi, \Lambda)$-critical edge for $f$. If no such edge exists, then \texttt{\ref{alg:find_critical_top}} returns $\bot$ and we assume that $\phi(f, \bot)= -\infty$.

\newcommand{\lb}{(}
\newcommand{\rb}{)}
\begin{procedure}[t]{\caption{FindCritical($f=\lb v, u \rb, \Lambda$)}}
	\label{alg:find_critical}
	
	\footnotesize{	\lIf{$V(\Lambda) = \{ u \}$ }{
		\Return \texttt{FindCriticalCandidate}($f$, $T$, $\Lambda$)
	}

	$c \gets $ Centroid of $\Lambda$\;
	Let $j$ be the unique index in $\{0,1,2,3\}$ such that $u \in V(\tau_c^j)$\;

	\lIf{$c \in U_e$ } { \Return \FindCritical{$f, \tau_c^j$} }

	\BlankLine

		$\mathcal{G} \gets \{  \texttt{FindCriticalCandidate}(f,T,\tau_c^i): i=0,1,2,3 \wedge i \neq j \}$\tcp*{Here $c \in D_e$}
		$g_1 \gets  \arg \max_{g \in \mathcal{G}} \phi(f,g)$\;
		$g_2 \gets$ \FindCritical{$f, \tau_c^j$}\;
		\Return $\arg \max_{g \in \{ g_1, g_2 \}} \{ \phi(f, g) \}$\;				
}
\end{procedure}

\begin{lemma}
	Let $f=(v,u) \in S(e,v)$, and let $\Lambda$ be a tree of the centroid decomposition $\T$ such that $u \in V(\Lambda)$.
	Procedure~\texttt{\ref{alg:find_critical}}($f, \Lambda$) returns a $\Lambda$-critical edge for $f$.
\end{lemma}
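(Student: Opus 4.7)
The plan is to prove the lemma by induction on $|V(\Lambda)|$, following the recursive structure of Procedure~\texttt{\ref{alg:find_critical}}. The key structural fact I will lean on is that $V(\Lambda) \cap D_e$ is connected in $T$: it is the intersection of a subtree of $T$ (namely $\Lambda$, a node of the centroid decomposition $\T$) with a connected component of $T-e$. For the base case $V(\Lambda) = \{u\}$, the procedure immediately returns \texttt{FindCriticalCandidate}$(f, T, \Lambda)$, whose preconditions are satisfied since $v \in V(T)$ and $\Lambda$ is the singleton containing $u$; its output is, by specification, a $(T, \Lambda)$-critical edge, which by the naming convention in the paper is a $\Lambda$-critical edge.

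For the inductive step, $|V(\Lambda)| > 1$, and I would split on whether the centroid $c$ of $\Lambda$ lies in $U_e$ or in $D_e$. If $c \in U_e$, then $c \notin V(\Lambda) \cap D_e$, and since this set is connected in $\Lambda$, after removing $c$ it must still lie inside a single component of the forest $\Lambda - c$; because $u$ belongs both to $V(\Lambda) \cap D_e$ and to $\tau_c^j$, this forces $V(\Lambda) \cap D_e \subseteq V(\tau_c^j)$, and in fact equality holds because $V(\tau_c^j) \subseteq V(\Lambda)$. Consequently the set $S(e, V(\Lambda)\cap D_e)$ coincides with $S(e, V(\tau_c^j) \cap D_e)$, so any $\tau_c^j$-critical edge returned by the inductive call \texttt{FindCritical}$(f, \tau_c^j)$ is already a $\Lambda$-critical edge. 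If instead $c \in D_e$, then $V(\Lambda) \cap D_e$ partitions disjointly as $\bigcup_{i=0}^{3} V(\tau_c^i) \cap D_e$ (where the $i=0$ term equals $\{c\}$), so a $\Lambda$-critical edge is the overall argmax of $\phi(f, \cdot)$ across $\tau_c^i$-critical edges for $i \in \{0,1,2,3\}$. For each $i \neq j$, disjointness of the subtrees $\tau_c^i$ gives $u \notin V(\tau_c^i)$, so \texttt{FindCriticalCandidate}$(f, T, \tau_c^i)$ meets its preconditions and returns a $(T, \tau_c^i)$-critical, i.e.\ $\tau_c^i$-critical, edge; for $i = j$, the recursive call \texttt{FindCritical}$(f, \tau_c^j)$ yields a $\tau_c^j$-critical edge by induction (since $|V(\tau_c^j)| < |V(\Lambda)|$). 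The final $\arg\max$ step of the procedure combines these candidates into a $\Lambda$-critical edge.

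The main obstacle I expect is cleanly establishing the Case-1 inclusion $V(\Lambda) \cap D_e \subseteq V(\tau_c^j)$, which hinges on the observation that $\Lambda$ is a subtree of $T$ and $D_e$ is a component of $T - e$, so their intersection is a connected subset of $T$ that cannot be fractured by deleting a vertex ($c$) lying outside it. Once this is settled, the remainder of the argument is bookkeeping plus a routine check that the preconditions of \texttt{FindCriticalCandidate} (namely $v \in V(\Psi)$ and either $u \notin V(\Lambda)$ or $V(\Lambda) = \{u\}$) continue to hold at each invocation triggered along the recursion.
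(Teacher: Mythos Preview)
Your proof is correct and follows essentially the same approach as the paper: induction on $|V(\Lambda)|$, the same base case, the same case split on whether the centroid $c$ lies in $U_e$ or $D_e$, and the same connectivity argument to show that when $c\in U_e$ all of $V(\Lambda)\cap D_e$ lies in the single child $\tau_c^j$. If anything, you are a bit more explicit than the paper in justifying why $V(\Lambda)\cap D_e$ is connected and in checking the preconditions of \texttt{FindCriticalCandidate}; the only cosmetic slip is the phrase ``equality holds'' after $V(\Lambda)\cap D_e\subseteq V(\tau_c^j)$, where the equality you actually need (and use in the next sentence) is $V(\Lambda)\cap D_e = V(\tau_c^j)\cap D_e$, which is exactly what your stated inclusion together with $V(\tau_c^j)\subseteq V(\Lambda)$ yields.
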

\begin{proof}
	The proof is by induction on the cardinality of $V(\Lambda)$.

	If $|V(\Lambda)| = 1$, then the only vertex in $\Lambda$ must be $u$ and Procedure~\texttt{\ref{alg:find_critical}} invokes Procedure~\texttt{\ref{alg:find_critical_top}}$(f, T, \Lambda$). Hence, assuming such a procedure is correct, it returns a $(T, \Lambda)$-critical edge, i.e., a $\Lambda$-critical edge.
	If $|V(\Lambda)| > 1$ then we distinguish two cases, depending on the position of the centroid $c$ of $\Lambda$.

	If $c \in D_e$ it is sufficient to notice that a $\Lambda$-critical edge for $f$ must be incident to a tree $\tau_c^i$ for some $i=0,1,2,3$. Let $j$ be the unique index in $\{0,1,2,3\}$ such that $u \in V(\tau_c^j)$. If $j \neq i$ then, assuming Procedure~\texttt{\ref{alg:find_critical_top}} is correct, it returns a $(T, \Lambda)$-critical edge $g_1$ (and hence a $\Lambda$-critical edge) for $f$. Procedure~\texttt{\ref{alg:find_critical}} then returns either $g_1$ or another edge $g$ such that $\phi(f, g)=\phi(f, g_1)$. If $j=i$, the algorithm is recursively invoked and, since $|V(\tau_c^i)| < |V(\Lambda)|$ we know, by the induction hypothesis, that it correctly returns a $\tau_c^i$-critical edge for $f$, which is also $\Lambda$-critical edge for $f$.

	If $c \in U_e$, then 
	we know that there is at most one $\tau_c^i$ containing one or more vertices in $D_e$ (as otherwise the vertices in $V(\Lambda) \cap D_e$ would be disconnected in $\Lambda$, a contradiction). Moreover, since $u \in  V(\Lambda) \cap D_e$, there is exactly one such tree $\tau_c^i$, namely $\tau_c^j$. The algorithm recursively invokes itself on $\tau_c^j$ and, since $|V(\tau_c^j)| < |V(\Lambda)|$, we know, by induction hypothesis, that it returns a $\tau_c^j$-critical edge for $f$, which is also $\Lambda$-critical edge for $f$.
\end{proof}

\begin{lemma}\label{lemma:FindCritical}
	Procedure~\texttt{\ref{alg:find_critical}}$(f,\Lambda)$ requires $O(\Gamma_{\emph{\texttt{FCC}}} \cdot \log n)$ time, where $\Gamma_{\emph{\texttt{FCC}}}$ is the time required by an invocation of Procedure \texttt{\ref{alg:find_critical_top}}.
\end{lemma}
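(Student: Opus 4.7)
The plan is to bound the recursion depth and the non-recursive work per invocation separately, and then combine. I would first observe that whenever Procedure~\texttt{\ref{alg:find_critical}} recursively invokes itself, it does so on $\tau_c^j$, which by construction of the centroid decomposition is a proper child of $\Lambda$ in $\T$. Since the height of $\T$ is $O(\log n)$, the recursion depth along any path from the initial call \texttt{\ref{alg:find_critical}}$(f,\Lambda)$ down to a base case is at most $O(\log n)$, and each recursive invocation passes to exactly one child (so the total number of invocations, not just the depth, is $O(\log n)$).

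Next I would quantify the work performed in a single invocation, excluding the recursive subcall. This consists of: the base-case check; looking up the centroid $c$ of $\Lambda$; determining the unique index $j \in \{0,1,2,3\}$ with $u \in V(\tau_c^j)$; issuing at most four calls to \texttt{\ref{alg:find_critical_top}}; and taking a maximum over a constant-size collection of $\phi$-values. The centroids of all nodes of $\T$ can be tabulated during the $O(n\log n)$ preprocessing that builds $\T$, so retrieving $c$ costs $O(1)$. To find $j$ in $O(1)$, I would rely on the preprocessing already used for Lemma~\ref{lemma:FindBCE}, which marks, for each vertex of $T$, its ancestor nodes in $\T$; given $u$, this allows us to read off in $O(1)$ the child of $\Lambda$ in $\T$ that contains $u$. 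All remaining bookkeeping (comparisons of $\phi$ values on a constant number of candidates) is $O(1)$. Hence the non-recursive work per invocation is $O(1) + O(\Gamma_{\texttt{FCC}}) = O(\Gamma_{\texttt{FCC}})$, where we absorb the $O(1)$ into $\Gamma_{\texttt{FCC}} = \Omega(1)$.

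Multiplying the bound on work per level by the $O(\log n)$ bound on the total number of invocations yields the claimed $O(\Gamma_{\texttt{FCC}} \cdot \log n)$ running time.

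The main potential obstacle is the $O(1)$ identification of the child $\tau_c^j$ containing $u$: naively, one would scan the (up to four) children of $\Lambda$ in $\T$ and test membership, which without support could cost $\Omega(|V(\Lambda)|)$ across a single call and ruin the bound. I would address this by reusing the same per-vertex ancestor information in $\T$ that underpins Lemma~\ref{lemma:FindBCE}, plus a pointer from each $\T$-node to its children indexed by the partition of $\tau-c$, so that the containing child can be retrieved directly. With this preprocessing in place (done once, globally, in $O(n \log n)$ time and space), the inductive argument above goes through cleanly.
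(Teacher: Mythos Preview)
Your proof is correct and follows essentially the same approach as the paper: the recursion of \texttt{FindCritical} always descends to the child $\tau_c^j$ of $\Lambda$ in $\T$ that contains $u$, so the sequence of invocations traces the root-to-$u$ path in $\T$, of length $O(\log n)$. The paper's proof states exactly this in one sentence; your version adds (reasonable) detail about the $O(1)$ per-level bookkeeping that the paper leaves implicit.
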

\begin{proof}
	Notice that Procedure~\texttt{\ref{alg:find_critical}} performs exactly one recursive invocation for each vertex of the tree $\T$ on the unique path between the root of $\T$ and $u$ in $\T$. The claim follows since the height of $\T$ is $O(\log n)$.
\end{proof}
In the next subsection, we show that $\Gamma_{\texttt{FCC}} =O(\log^2 n)$, and then we give our final result.

\subsection{Procedure FindCriticalCandidate}

In this subsection, we describe the core of the procedure that computes a critical edge for $f$. Let us first describe informally the main idea of this part. Let $b \in U_e$ and $c \in D_e$, and consider any two edges $f=(v,u),g=(x,y) \in S(e)$ such that $b$ (resp. $c$) is on the unique path from $x$ to $v$ (resp. from $y$ to $u$) in $T$ (see Figure \ref{fig:par-centr}). It turns out that the stretch factor of \emph{any} $f$ w.r.t. a \emph{given} $g$ can be though as a linear function $\Phi_{b,c,g}(t) = \alpha_{b,c}(g) \cdot t + \beta_{b,c}(g)$, where $\alpha_{b,c}(g)$ and $\beta_{b,c}(g)$ only depend on $g$. More precisely, we will have that $\phi(f,g) = \Phi_{b,c,g}(t_{b,c}(f))$, for a suitable value $t_{b,c}(f)$ which only depends on $f$. Hence, whenever we look for a critical edge for $f$, we can ask for a corresponding function $\Phi_{b,c,g}(t)$ with maximum value on $t_{b,c}(f)$. Since we do not know a priori the edge $f$ for which we need to compute a critical edge, we will maintain this information as the \emph{upper envelope} of a suitable set of functions. Let us make this idea more precise.

\begin{definition}[Upper Envelope]
	Let $\F=\{ \Phi_1, \Phi_2, \dots, \Phi_\ell \}$ be a finite set of functions, where $\Phi_i : \mathbb{R} \to \mathbb{R}$ for every $i=1,2,\dots,\ell$. The upper envelope of $\F$ is defined as $\displaystyle \UE_\F: t \in \mathbb{R} \mapsto  \arg \max_{\Phi \in \F}\Phi(t) \in 2^{\F}$.
\end{definition}

Let $b \in U_e$ and $c \in D_e$. Given an edge $f=(v, u)$, define $t_{b,c}(f)$ as the quantity $d_T(b, v)+ w(f) + d_T(u, c)$. Given an edge $g=(x, y)$, define $\alpha_{b,c}(g) = \frac{1}{w(g)}$ and $\beta_{b,c}(g) = \frac{d_T(x, b) + d_T(c, y)}{w(g)}$. Notice how, once $b$ and $c$ are fixed, $t_{b,c}(f)$ only depends on $f$ while $\alpha_{b,c}(g)$ and $\beta_{b,c}(g)$ only depend on $g$. Let $\Phi_{b,c,g}(t) = \alpha_{b,c}(g) \cdot t + \beta_{b,c}(g)$.

\begin{lemma}
	\label{lemma:linear_fnc}
	Let $f=(v,u) \in S(e,v)$.
	Let $b \in U_e$ and $c \in D_e$. Let $X$ (resp. $Y$) be a set of vertices $x \in U_e$ (resp. $y \in D_e$) such that vertex $b$ (resp. $c$) is on the unique path from $x$ to $v$ (resp. from $y$ to $u$) in $T$. 
	For every $g \in S(e, X, Y)$ we have $\phi(f,g) = \Phi_{b,c,g}(t_{b,c}(f))$.
\end{lemma}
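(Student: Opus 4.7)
The plan is to prove this by direct algebraic manipulation: I would unfold the definition of $\phi(f,g)$ and then split the two tree distances appearing in its numerator through the intermediate vertices $b$ and $c$, collecting terms so that the dependence on $f$ is isolated into a single quantity.

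First, I would invoke the hypothesis that $b$ lies on the unique $x$--$v$ path in $T$ and that $c$ lies on the unique $y$--$u$ path in $T$. In a tree, this yields the additive identities
\[
d_T(x,v)=d_T(x,b)+d_T(b,v),\qquad d_T(u,y)=d_T(u,c)+d_T(c,y).
\]
Substituting both into the definition
\[
\phi(f,g)=\frac{d_T(x,v)+w(f)+d_T(u,y)}{w(g)}
\]
produces a five-term numerator over $w(g)$.

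Next, I would regroup that numerator so as to segregate the quantities that depend on $f$ from those that depend only on $g$ (once $b$ and $c$ are fixed). The $f$-dependent part is precisely $d_T(b,v)+w(f)+d_T(u,c)=t_{b,c}(f)$, while the purely $g$-dependent part is $d_T(x,b)+d_T(c,y)$. Dividing by $w(g)$ and reading off the coefficient of $t_{b,c}(f)$ and the additive constant gives exactly $\alpha_{b,c}(g)=\tfrac{1}{w(g)}$ and $\beta_{b,c}(g)=\tfrac{d_T(x,b)+d_T(c,y)}{w(g)}$, so $\phi(f,g)=\alpha_{b,c}(g)\cdot t_{b,c}(f)+\beta_{b,c}(g)=\Phi_{b,c,g}(t_{b,c}(f))$.

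The only point requiring care is that the two additive distance identities actually apply, i.e., that the $x$--$v$ and $y$--$u$ paths really are honest tree paths containing $b$ and $c$ respectively. This follows immediately from $g\in S(e,X,Y)$: we have $x,v\in U_e$ and $y,u\in D_e$, and since $U_e$ and $D_e$ are the two connected components of $T-e$, both tree paths lie entirely on the appropriate side of the cut, where $b$ and $c$ live by assumption. After this sanity check the remainder is the routine computation sketched above, and I do not foresee any real obstacle.
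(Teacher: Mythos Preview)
Your proposal is correct and follows essentially the same approach as the paper: expand $\phi(f,g)$, split $d_T(x,v)$ and $d_T(u,y)$ through $b$ and $c$ using the path hypotheses, regroup into the $f$-dependent and $g$-dependent parts, and identify these with $t_{b,c}(f)$, $\alpha_{b,c}(g)$, and $\beta_{b,c}(g)$. Your added remark that $x,v\in U_e$ and $y,u\in D_e$ ensures the relevant tree paths stay on the correct side of the cut is a fine sanity check that the paper leaves implicit.
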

\begin{proof}
Let $g = (x, y)$. We have:
\begin{align*}
	\phi(f, g) & = \frac{d_T(x, v) + w(f) + d_T(u, y)}{w(g)}
	= \frac{d_T(x, b) + d_T(b, v) + w(f) + d_T(u, c) + d_T(c, y)}{w(g)} \\ 
	& = \frac{d_T(b, v)+ w(f) + d_T(u, c)}{w(g)} + \frac{d_T(x, b) + d_T(c, y)}{w(g)}
	= \alpha_{b,c}(g) t_{b,c}(f) + \beta_{b,c}(g) \\
	& = \Phi_{b,c,g}(t_{b,c}(f)). \tag*{\qedhere}
\end{align*}
\end{proof}

\begin{figure}[t]
\floatbox[{\capbeside\thisfloatsetup{capbesideposition={right,top},capbesidewidth=5.5cm}}]{figure}[\FBwidth]
{\caption{Illustration of Lemma \ref{lemma:critical_edge_query}. $f$ is a swap edge for $e$, $\Psi$ and $\Lambda$ are two trees of the centroid decomposition, and $b$ and $c$ are their corresponding parent centroids. $g$ is a potential $(\Psi, \Lambda)$-critical edge for $f$. Notice that the unique path from $x$ to $v$ (resp. from $y$ to $u$) passes through $b$ (resp. $c$).}\label{fig:par-centr}}
{\includegraphics[scale=0.8]{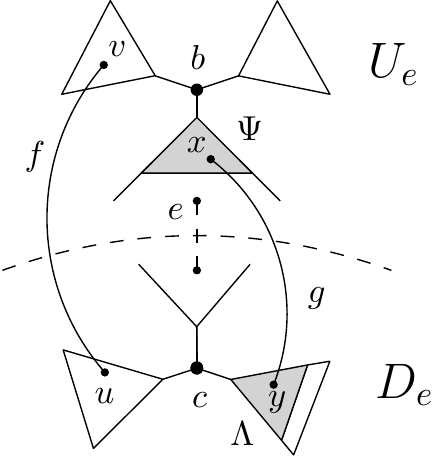}\hspace{2cm}}
\end{figure}

\begin{definition}[Parent centroid]
	Let $\tau$ be a tree of the centroid decomposition $\T$. The \emph{parent centroid} of $\tau$ is the centroid of the parent of $\tau$ in $\T$.
\end{definition}

Lemma~\ref{lemma:linear_fnc} is instrumental to proving the following (see Figure \ref{fig:par-centr}):
\begin{lemma}
	\label{lemma:critical_edge_query}
	Let $f=(v,u) \in S(e,v)$, and let $\Psi, \Lambda$ be two trees of the centroid decomposition of $T$ such that the following conditions hold: (i) $v \not\in V(\Psi)$ or $V(\Psi) = \{ v \}$, and (ii) $u \not\in V(\Lambda)$ or $V(\Lambda) = \{ u \}$. Let $b$ (resp. $c$) be the parent centroid of $\Psi$ (resp. $\Lambda$), and assume that $b \in U_e$ (resp. $c \in D_e$).
	Then, an edge $g$ is a $(\Psi, \Lambda)$-critical edge for $f$ if and only if $\Phi_{b,c,g} \in \UE_{\F}(t_{b,c}(f))$ where $\F=\{ \Phi_{b,c,g'} : g' \in S(e, V(\Psi) \cap U_e, V(\Lambda) \cap D_e) \}$.
\end{lemma}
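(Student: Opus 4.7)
The plan is to reduce the iff to the pointwise identity $\phi(f, g') = \Phi_{b,c,g'}(t_{b,c}(f))$ for every $g' \in S(e, V(\Psi) \cap U_e, V(\Lambda) \cap D_e)$, which is an immediate application of Lemma~\ref{lemma:linear_fnc} with $X = V(\Psi) \cap U_e$ and $Y = V(\Lambda) \cap D_e$. Once such a pointwise identity is established, the conclusion follows because $g$ attains $\max_{g' \in S(e, V(\Psi) \cap U_e, V(\Lambda) \cap D_e)} \phi(f, g')$ (i.e., is a $(\Psi,\Lambda)$-critical edge for $f$) if and only if $\Phi_{b,c,g}$ attains $\max_{g' \in S(e, V(\Psi) \cap U_e, V(\Lambda) \cap D_e)} \Phi_{b,c,g'}(t_{b,c}(f))$ (i.e., $\Phi_{b,c,g} \in \UE_\F(t_{b,c}(f))$).

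To invoke Lemma~\ref{lemma:linear_fnc}, I need to verify its two path-through-centroid hypotheses: (a) for every $x \in V(\Psi) \cap U_e$, the unique $T$-path from $x$ to $v$ passes through $b$; (b) for every $y \in V(\Lambda) \cap D_e$, the unique $T$-path from $y$ to $u$ passes through $c$. For (a), I would split on condition (i) of the lemma. If $V(\Psi) = \{v\}$, the only admissible $x$ is $v$ itself, so the $T$-path from $x$ to $v$ is trivial and the claim follows from the way parent centroids are defined for such a degenerate singleton child, together with the hypothesis $b \in U_e$. If instead $v \notin V(\Psi)$, I would leverage the fact that $\Psi$ is a connected component of $\tau' - b$, where $\tau'$ is the parent of $\Psi$ in $\T$ and $b$ its centroid; this forces any $\tau'$-internal path from $V(\Psi)$ to $V(\tau') \setminus V(\Psi)$ to pass through $b$, and I would propagate this separation up the chain of ancestors of $\Psi$ in $\T$ so that the full $T$-path from $x$ to $v$ is likewise routed through $b$. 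Statement (b) follows symmetrically from condition (ii) and $c \in D_e$.

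The main obstacle I expect is the verification of (a) when $v \notin V(\Psi)$. A subtree in the centroid decomposition may, in principle, attach to the rest of $T$ at several vertices, and the naive statement ``$b$ separates $V(\Psi)$ from $V(T) \setminus V(\Psi)$ in $T$'' is not true in general. The argument must therefore use the full set of hypotheses---in particular that $b, v \in U_e$ (so the relevant sub-path stays on the root-side of $e$) and that $\Psi$ sits as a strict descendant in $\T$ of any ancestor containing $v$---in order to single out $b$ as the relevant separator along the $T$-path from $x$ to $v$. Once this structural claim is established, the remainder of the proof is a routine unwinding of definitions and a direct appeal to Lemma~\ref{lemma:linear_fnc}.
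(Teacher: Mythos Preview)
Your plan coincides with the paper's: invoke Lemma~\ref{lemma:linear_fnc} with $X=V(\Psi)\cap U_e$ and $Y=V(\Lambda)\cap D_e$ to obtain the pointwise identity $\phi(f,g')=\Phi_{b,c,g'}(t_{b,c}(f))$ for every $g'$ in the relevant set, and then read off the biconditional from the definitions of $(\Psi,\Lambda)$-critical edge and upper envelope. The only nontrivial work is checking the hypothesis of Lemma~\ref{lemma:linear_fnc} that $b$ lies on the $T$-path from each $x\in X$ to $v$ (and symmetrically for $c$), and this is exactly where your proposal and the paper diverge.

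The paper does \emph{not} use $b,v\in U_e$ for this step. It instead isolates a purely structural claim about the centroid decomposition---if the $\T$-path between the leaves associated with $p$ and $q$ passes through a node whose centroid is $z$, then the $T$-path from $p$ to $q$ passes through $z$---and applies it with $p=x$, $q=v$, $z=b$, using only that the parent of $\Psi$ lies on the $\T$-path from $\{x\}$ to $\{v\}$ (because $x\in V(\Psi)$ while $v\notin V(\Psi)$). Your instinct that $b$ need not globally separate $V(\Psi)$ from $V(T)\setminus V(\Psi)$ is correct, but the remedy you propose---leaning on $b,v\in U_e$ and ``propagating up the chain of ancestors''---does not close the gap. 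On the path $1{-}2{-}\cdots{-}8$ with first centroid $4$ and then centroid $6$ for $\{5,6,7,8\}$, take $\Psi=\{5\}$, $b=6$, $x=5$, $v=3$, and $e=(2,3)$ with the root chosen so that $U_e=\{3,\dots,8\}$: all stated hypotheses of the lemma hold and $x,v,b\in U_e$, yet the $T$-path $5{-}4{-}3$ avoids $b=6$. So the $U_e$ hypothesis is not what rescues the argument. What does make the separation work in every actual call from Procedure~\texttt{\ref{alg:find_critical_top}} is the extra (there implicit) fact that $v$ belongs to the parent $\tau'$ of $\Psi$ in $\T$: then $x\in V(\Psi)$ and $v\in V(\tau')\setminus V(\Psi)$ sit in different components of $\tau'-b$ (or $v=b$), and since $\tau'$ is a subtree of $T$ the $T$-path from $x$ to $v$ must cross $b$. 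Your ``propagation'' should be made precise along these lines rather than through the $U_e$ hypothesis.
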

\begin{proof}
	First of all we show the following property of the centroid decomposition $\T$: let $p,q \in V(T)$, and suppose that the unique path in $\T$ between the leaf nodes associated with $p$ and $q$ contains a node whose corresponding centroid is $z$. Then, the unique path between $p$ and $q$ in $T$ contains $z$.  Indeed, if $z$ is either $p$ or $q$, the property is trivially true. On the other hand, suppose that $z \not\in \{p, q\}$, and let $\tau$ be the subtree of $T$ associated with $z$ in $\T$. Then, let $\tau_z^i$ be the child subtree of $\tau$ containing $p$. Observe that $q$ is not in $\tau_z^i$. Moreover, by construction, each path from a node of $\tau_z^i$, and in particular from $p$, to any node outside $\tau_z^i$, and in particular to $q$, must pass through $z$.
	
	We now prove the claim. If $V(\Psi) = \{ v \}$ (resp. $V(\Lambda)=\{u\}$) then it follows from Lemma~\ref{lemma:linear_fnc} by choosing  $X = \{ v \}$ and $Y = V(\Lambda) \cap D_e$ (resp. $X = V(\Psi) \cap U_e$ and $Y = \{ u \}$).
	The complementary case is the one in which $v \not\in V(\Psi)$ and $u \not\in V(\Lambda)$.
	Consider the vertices $v$ and $b$ (resp. $u$ and $c$) in $\T$ and notice that $v$ (resp. $u$) cannot be an ancestor of $b$ (resp. $c$). Indeed, if that were the case, then the subtree of $T$ induced by the vertices in $V(\Psi)$ (resp. $V(\Lambda)$) would contain $b$ (resp. $c$) contradicting the hypothesis.
	Hence, the path from any vertex in $V(\Psi)$ to $v$ (resp. $V(\Lambda)$ to $u$) traverses $b$ (resp. $c$) in $\T$  and therefore the same holds in $T$.
	The claim follows by invoking Lemma~\ref{lemma:linear_fnc} with $X=V(\Psi) \cap U_e$ and $Y = V(\Lambda) \cap D_e$.
\end{proof}

Lemma~\ref{lemma:critical_edge_query} allows us to design a recursive procedure to compute a $(\Psi, \Lambda)$-critical edge for $f$ (see Procedure~\texttt{\ref{alg:find_critical_top}}). To this aim we will make use of a data structure $\U_e$ that, for each edge $f \in S(e)$, and for each pair of trees $\Psi, \Lambda$ of the centroid decomposition, can perform a query operation that we name $\U_e(f, \Psi, \Lambda)$. This query reports an edge whose function $\Phi_{b,c,g}$ is in $\UE_{\F}(t_{b,c}(f))$ where $b$ and $c$ are the parent centroids of $\Psi$ and $\Lambda$, respectively, and $\F=\{ \Phi_{b,c,g'} : g' \in S(e, V(\Psi) \cap U_e, V(\Lambda) \cap D_e) \}$.

Next two lemmas show the correctness and the running time of the procedure:

\begin{lemma}
\label{lemma:critical_edge_algorithm}
Let be given an edge $f=(v,u) \in S(e, v)$ and two trees $\Psi, \Lambda$ of the centroid decomposition such that: (i) $v \in V(\Psi)$, and (ii) $u \not\in V(\Lambda)$ or $V(\Lambda) = \{ u \}$. Then, Procedure~\texttt{\ref{alg:find_critical_top}}($f, \Psi, \Lambda $) computes a $(\Psi, \Lambda)$-critical edge for $f$.
\end{lemma}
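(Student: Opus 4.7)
The plan is to prove the lemma by induction on $|V(\Psi)|$, using Lemma~\ref{lemma:critical_edge_query} to reduce the computation of a $(\Psi', \Lambda)$-critical edge to a single upper-envelope query on $\U_e$ whenever $\Psi'$ satisfies condition~(i) of that lemma and its parent centroid lies in $U_e$. I would first observe that the parent centroid of $\Lambda$ always lies in $D_e$: indeed, Procedure~\texttt{\ref{alg:find_critical_top}} is only invoked from Procedure~\texttt{\ref{alg:find_critical}} either with $V(\Lambda)=\{u\}$ (so the parent centroid equals $u\in D_e$) or with $\Lambda=\tau_c^i$ for some $c\in D_e$, so condition~(ii) of Lemma~\ref{lemma:critical_edge_query} is taken care of throughout.

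For the base case $V(\Psi)=\{v\}$, the subtree $\Psi$ is a leaf of $\T$ whose parent has $v$ as its centroid, so the parent centroid of $\Psi$ equals $v\in U_e$. The hypotheses of Lemma~\ref{lemma:critical_edge_query} therefore hold and the procedure can just issue the single query $\U_e(f,\Psi,\Lambda)$ to obtain a $(\Psi,\Lambda)$-critical edge. For the inductive step, let $b$ be the centroid of $\Psi$, let $\tau_b^0=\{b\},\tau_b^1,\tau_b^2,\tau_b^3$ be the children of $\Psi$ in $\T$, and let $\tau_b^j$ be the unique child containing $v$. I would split on the position of $b$ relative to $e$.

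If $b\in U_e$, then every sibling $\tau_b^i$ with $i\neq j$ satisfies condition~(i) of Lemma~\ref{lemma:critical_edge_query} (because $v\notin V(\tau_b^i)$) and has parent centroid $b\in U_e$, so the query $\U_e(f,\tau_b^i,\Lambda)$ returns a $(\tau_b^i,\Lambda)$-critical edge; a recursive call on $(f,\tau_b^j,\Lambda)$ yields, by the induction hypothesis, a $(\tau_b^j,\Lambda)$-critical edge. Since $V(\Psi)\cap U_e=\bigcup_{i=0}^{3}(V(\tau_b^i)\cap U_e)$, returning the candidate that maximizes $\phi(f,\cdot)$ produces a $(\Psi,\Lambda)$-critical edge. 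If instead $b\in D_e$, I would argue that no sibling $\tau_b^i$ with $i\neq j$ can contain any $U_e$-vertex: any such vertex $x$ would force the path from $x$ to $v$ in $\Psi$ (which coincides with the path in $T$, as $\Psi$ is a subtree of $T$) to pass through $b\in D_e$, contradicting the fact that both $x$ and $v$ lie in $U_e$ and that $e$ is the only edge of $T$ between $U_e$ and $D_e$. Hence $S(e,V(\tau_b^i)\cap U_e,V(\Lambda)\cap D_e)=\emptyset$ for all $i\neq j$, and a single recursive call on $\tau_b^j$ is enough.

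The main obstacle is precisely this case analysis: Lemma~\ref{lemma:critical_edge_query} demands that the parent centroid of a queried subtree lie in $U_e$, which fails exactly when $b\in D_e$, and without the structural observation above one would not know how to avoid querying siblings whose parent centroid is ``on the wrong side'' of $e$. Once that observation is in place, correctness follows by combining Lemma~\ref{lemma:critical_edge_query} with the induction hypothesis, and since each recursive call descends one level in $\T$, the recursion depth is $O(\log n)$, matching the $\Gamma_{\texttt{FCC}}=O(\log^2 n)$ bound announced afterwards (with a constant number of $\U_e$ queries per level, each taking $O(\log n)$ time).
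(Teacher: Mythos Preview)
Your proposal is correct and follows essentially the same induction on $|V(\Psi)|$ with the same case split on whether the centroid $b$ lies in $U_e$ or $D_e$ as the paper's proof. You are in fact slightly more careful than the paper in explicitly checking that the parent centroid of $\Lambda$ lies in $D_e$ (so that Lemma~\ref{lemma:critical_edge_query} really applies), though you omit the trivial line that the procedure correctly returns~$\bot$ when $V(\Lambda)\cap D_e=\emptyset$.
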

\begin{proof}
	First of all notice that if $V(\Lambda) \cap D_e = \emptyset$, then the algorithm correctly returns $\bot$. We now prove the claim by induction on $|V(\Psi)|$. If $|V(\Psi)| = 1$, then the only vertex in $\Psi$ must be $v$ and Procedure~\texttt{\ref{alg:find_critical_top}} queries $\U_e$ for $\U_e(f, \Psi, \Lambda)$. By Lemma~\ref{lemma:critical_edge_query}, the returned edge is a $(\Psi, \Lambda)$-critical edge for $f$. 
	If $|V(\Psi)| > 1$ then we distinguish two cases, depending on the position of the centroid $b$ of $\Psi$. If $b \in U_e$ it is sufficient to notice that a $(\Psi, \Lambda)$-critical edge for $f$ must be incident to a tree $\tau_b^i$ for some $i=0,1,2,3$. Let $j$ be the unique index in $\{0,1,2,3\}$ such that $v \in V(\tau_b^j)$. If $j \neq i$ then, by Lemma~\ref{lemma:critical_edge_query}, the query $\U_e(f, \tau_b^i, \Lambda)$ returns a $(\tau_b^i, \Lambda)$-critical edge $g'$ (and hence $g'$ is also a $(\Psi, \Lambda)$-critical edge) for $f$. Procedure~\texttt{\ref{alg:find_critical}} then returns either $g'$ or another edge $g$ such that $\phi(f, g)=\phi(f, g')$. If $j=i$, the algorithm is recursively invoked and, since $|V(|\tau_b^i|)| < |V(\Psi)|$ we know, by the induction hypothesis, that it returns a $(\tau_b^i, \Lambda)$-critical edge for $f$, which is also $(\Psi, \Lambda)$-critical edge for $f$.
If $b \in D_e$, then there is at most one $\tau_b^i$ containing at least one vertex in $U_e$ (as the converse would imply that the vertices in $V(\Psi) \cap U_e$ are disconnected in $\Psi$, a contradiction). Moreover, since $v \in V(\Psi) \cap U_e$, there is exactly one such tree $\tau_b^i$, namely $\tau_b^j$. The algorithm recursively invokes itself on $\tau_b^j$ and we know, by induction hypothesis, that it returns a $\tau_b^j$-critical edge for $f$, which is also $(\Psi,\Lambda)$-critical edge for $f$.
\end{proof}

\begin{procedure}[t]{\caption{FindCriticalCandidate($f=\lb v, u \rb, \Psi, \Lambda $)}}

	\label{alg:find_critical_top}
	\footnotesize{\lIf{$V(\Lambda) \cap D_e = \emptyset$}{\Return $\bot$}
	\lIf{$V(\Psi) = \{ v \}$}{\Return $\U_e(f, \Psi, \Lambda)$}	

	$b \gets $ Centroid of $\Psi$\;	
	Let $j$ be the unique index in $\{0,1,2,3\}$ such that $v \in V(\tau_b^j)$\;

	\lIf{$b \in D_e$}
	{\Return \FindCriticalCandidate{$f, \tau_b^j, \Lambda$}}
	
	$\mathcal{G} \gets \{  \U_e(f, \tau_b^i, \Lambda) : i=0,1,2,3 \wedge i \neq j \}$\tcp*{Here $b \in U_e$}
	$g_1 \gets  \arg \max_{g \in \mathcal{G}} \phi(f,g)$\;
	$g_2 \gets $ \FindCriticalCandidate{$f, \tau_b^j, \Lambda$}\;
	\Return $\arg \max_{g \in \{ g_1, g_2 \}} \{ \phi(f, g) \}$\;		
}
\end{procedure}

\begin{lemma}\label{lemma:FindCriticalCandidate}
 Procedure~\texttt{\ref{alg:find_critical_top}}($f, \Psi, \Lambda $) requires $O(\Gamma_{\U_e} \cdot \log n)$ time, where $\Gamma_{\U_e}$ is the time required by a query on $\U_e$.
\end{lemma}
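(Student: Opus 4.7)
The plan is to mirror the argument used for Lemma~\ref{lemma:FindCritical}, since Procedure~\texttt{\ref{alg:find_critical_top}} has essentially the same recursive skeleton as Procedure~\texttt{\ref{alg:find_critical}}, only that it now descends along the tree $\T$ with respect to the endpoint $v$ (captured by $\Psi$) rather than the endpoint $u$ (captured by $\Lambda$). So the first step is to bound the recursion depth by analyzing how $\Psi$ shrinks between consecutive invocations.

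First I would observe that in every recursive call, the new first-tree argument is a child $\tau_b^j$ of the current $\Psi$ in $\T$ (this is explicit in both the $b\in D_e$ and the $b\in U_e$ branches of the procedure). Consequently, the sequence of trees $\Psi$ appearing along any chain of recursive invocations forms a strictly descending path in $\T$ starting at the initial $\Psi$. Since $\T$ has height $O(\log n)$, this already caps the recursion depth by $O(\log n)$; the argument $\Lambda$, on the other hand, is never modified and therefore plays no role in bounding the depth.

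Next I would bound the work performed in a single (non-recursive) level. The base cases ($V(\Lambda)\cap D_e=\emptyset$ and $V(\Psi)=\{v\}$) cost $O(1)$ and $O(\Gamma_{\U_e})$, respectively; the test ``$V(\Lambda)\cap D_e=\emptyset$'' can be answered in $O(1)$ after the same $O(\log n)$ marking preprocessing already used in Lemma~\ref{lemma:FindBCE}. In the inductive step, computing the centroid $b$ of $\Psi$ is $O(1)$ with the centroid decomposition precomputed, identifying the index $j$ with $v\in V(\tau_b^j)$ is $O(1)$, membership of $b$ in $U_e$ or $D_e$ is $O(1)$, the construction of $\mathcal{G}$ issues at most three queries on $\U_e$ (one per child $\tau_b^i$ with $i\ne j$) at a total cost of $O(\Gamma_{\U_e})$, and taking the $\arg\max$ over the constant-size set $\mathcal{G}$ or $\{g_1,g_2\}$ is $O(1)$. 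Hence each level of the recursion costs $O(\Gamma_{\U_e})$.

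Combining the two bounds immediately gives the claimed $O(\Gamma_{\U_e}\cdot\log n)$ running time. The only subtlety worth double-checking is that the factor of $3$ queries made at an internal level is a genuine constant that does not inflate through the recursion: since at most one recursive call is issued per invocation (always on $\tau_b^j$), the total number of invocations along the recursion is $O(\log n)$ rather than exponential, so the three $\U_e$-queries contribute only an additive $O(\Gamma_{\U_e})$ per level and not a multiplicative blow-up. This is the one place a careless reading of the pseudocode could suggest a larger bound, so I would make it explicit in the write-up.
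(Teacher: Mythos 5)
Your proof is correct and follows essentially the same approach as the paper: bound the recursion depth by the $O(\log n)$ height of $\T$ (since each call descends to a child of $\Psi$), and observe that each level performs $O(1)$ queries to $\U_e$ plus constant overhead. You have, in fact, correctly identified that the recursion descends along the side of $v$ (i.e., on $\Psi$), which the paper's own terse proof misstates as descending towards $u$ — evidently a copy-paste slip from the proof of Lemma~\ref{lemma:FindCritical}.
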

\begin{proof}
	Notice that Procedure~\texttt{\ref{alg:find_critical_top}} performs exactly one recursive invocation for each vertex of the tree $\T$ on the unique path between the root of $\T$ and $u$ in $\T$. The claim follows since the height of $\T$ is $O(\log n)$.
\end{proof}

Thus, to get the promised running time of $O(\log^2n)$ for $\Gamma_{\texttt{FCC}}$, we are left to prove that $\Gamma_{\U_e}=O(\log n)$.
Actually, such a bound can be obtained by suitably implementing $\U_e$ in such a way that all the underlying upper envelope functions are efficiently maintained. Due to space limitation, this technical part is postponed to the appendix.
By combining all the lemmas, and by observing that we need $O(n^2)$ space to handle the top-trees, and $O(m \log ^2 n)$ space to implement each $\U_e$ (as shown in the appendix), we eventually can give the following:

\begin{theorem}\label{main}
	The \absets problem can be solved in $O(n^2 \log^4 n)$ time and $O(n^2+m \log^2 n)$ space.
\end{theorem}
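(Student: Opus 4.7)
The theorem aggregates the complexity bounds established in the preceding lemmas, so my plan is to compose them bottom-up, granting the two technical claims that are deferred to the appendix: (i) Step~\ref{ln:find_f} of Procedure~\texttt{\ref{alg:find_BCE}} can be performed in $\Gamma_f = O(\log n)$ time after an $O(n^2)$-space top-tree preprocessing, and (ii) each query on $\U_e$ costs $\Gamma_{\U_e} = O(\log n)$ time with $\U_e$ occupying $O(m \log^2 n)$ space. Once these two costs are in hand, everything else is a routine substitution into the lemmas already proved.

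Chaining the bounds in reverse order of nesting, Lemma~\ref{lemma:FindCriticalCandidate} combined with $\Gamma_{\U_e} = O(\log n)$ gives $\Gamma_{\texttt{FCC}} = O(\log^2 n)$; substituting into Lemma~\ref{lemma:FindCritical} yields $\Gamma_{\texttt{FC}} = O(\log^3 n)$; and Lemma~\ref{lemma:FindBCE} with $\Gamma_f = O(\log n)$ then bounds a single top-level call to Procedure~\texttt{\ref{alg:find_BCE}}$(T)$ by $O(\log^4 n)$. This call returns a $v$-BCE (hence a $v$-BSE by Proposition~\ref{prop:min_bse}) together with a critical edge $g^*$ from which the stretch factor of the swap tree is recovered as $\phi(f^*, g^*)$ (as anticipated by the footnote to the Critical Edge definition). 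Algorithm~\absets{} executes $n-1$ phases, one per edge $e \in E(T)$, and within each phase iterates over the $|U_e| = O(n)$ candidate vertices, so the overall running time is $(n-1) \cdot O(n) \cdot O(\log^4 n) = O(n^2 \log^4 n)$. On the space side, the top-trees account for the $O(n^2)$ term and $\U_e$ for the $O(m \log^2 n)$ term; the auxiliary bookkeeping (the tree $\T$, ancestor markings used at the beginning of the proof of Lemma~\ref{lemma:FindBCE}, side labels $U_e$/$D_e$ for the centroids encountered in the recursions) uses $O(n \log n)$ space and is absorbed.

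The only genuinely non-trivial obstacle is establishing the $O(\log n)$-per-query, $O(m \log^2 n)$-space bound for $\U_e$ asserted in (ii). The difficulty is that as $e$ varies over the outer loop, the membership of the edges of $G$ in the various $(\Psi,\Lambda)$-buckets identified by Lemma~\ref{lemma:critical_edge_query} changes, so the upper envelopes of the linear families $\{\Phi_{b,c,g'}\}$ cannot be built once and for all. The plan for the appendix is to exploit the fact that each non-tree edge $g$ belongs to only $O(\log^2 n)$ such buckets -- one per pair of ancestors of its endpoints in the $O(\log n)$-depth tree $\T$ -- so that maintaining, for every bucket, a dynamic upper-envelope structure supporting $O(\log n)$ insertions, deletions, and point queries delivers exactly the space and query bounds claimed. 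With that missing piece supplied, the two substitutions made in the paragraph above complete the proof of Theorem~\ref{main}.
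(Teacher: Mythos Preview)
Your proposal is correct and follows essentially the same route as the paper: the theorem is obtained by chaining Lemmas~\ref{lemma:FindCriticalCandidate}, \ref{lemma:FindCritical}, and \ref{lemma:FindBCE} with the two appendix facts $\Gamma_f=O(\log n)$ and $\Gamma_{\U_e}=O(\log n)$, and then multiplying by the $O(n^2)$ sub-phases of Algorithm~\absets. The only item you leave implicit is the total \emph{update} cost of the $\U_e$ structures across all phases (the appendix bounds it by $O(m\log^4 n)$, absorbed in $O(n^2\log^4 n)$), but you correctly flag this as the deferred technical piece.
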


\bibliographystyle{plainurl}
\bibliography{bibliography}

\newpage
\appendix
\section{Reducing the degree of \texorpdfstring{$T$}{T}}\label{sec:degree_reduction}
Here we show a linear time reduction that transforms $G$ and a rooted spanning tree $T$ of $G$ into an equivalent graph $G'=(V(G'),E(G'),\overline {w})$ and a corresponding binary spanning tree $T'$ such that $|V(G')|=\Theta(n)$ and $|E(G')|=\Theta(m)$.

Initially, $G', w'$, and $T'$ coincide with $G,w$, and $T$, respectively. We iteratively search for a vertex $u$ in $T'$ that has 3 or more children, and we lower its degree. Let $v_1,\dots,v_h$, with $h\geq 3$, be the children of $u$. We remove all the edges $\{(u,v_i):1\leq i\leq h\}$ from both $G'$ and $T'$, then we add to both $G'$ and $T'$ a binary tree whose root coincides with $u$, and that has exactly $h$ leaves $x_1,\dots,x_h$. We assign weight $w'(e)=0$ to all the edges $e$ of this tree. Finally, we add to $G'$ and $T'$ an edge $(x_i,v_i)$ for each $1\leq i \leq h$, and we set $w'(x_i,v_i)=w(u,v_i)$. An example of such a transformation is shown in Figure~\ref{fig:degree-reduction}.

\begin{figure}[ht!]
    \centering
    \includegraphics[scale=1.2]{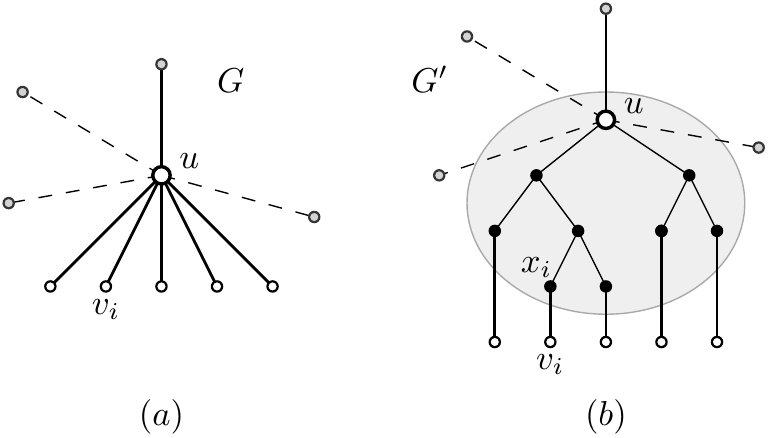}
    \caption{Reducing the degree of vertices in $G$: on the left side, the tree $T$ (solid edges) embedded in $G$, on the right side the superimposition of the binary tree to $T$ in order to get a maximum degree of $3$. Solid edges in the gray area have weight $0$, while the weight of $(x_i ,v_i )$ is $w(u,v_i )$.}
    \label{fig:degree-reduction}
\end{figure}

Clearly, $|V(G')|=O(|V(G)|)$, $|E(G')|=O(|E(G)|)$, and, moreover, the computation of $G'$ and $T'$ requires linear time. Now, observe that, for every $a,b \in V(G)$, it holds that $d_{T_{e/f}}(a,b)=d_{T'_{e/f}}(a,b)$. Furthermore, for every edge $e=(u,v_i)$ of $T$, $f$ is a swap edge for $e$ in $T$ iff $f$ is a swap edge for the edge $(z,v_i)$ in $T'$, where $z$ is the parent of $v_i$ in $T'$. As a consequence, we can conclude that, for every edge $e=(u,v_i)$ of $T$, $f \in S(e)$ is a BSE for $e$ w.r.t. $T$ iff $f$ is a BSE for the edge $(z,v_i)$ w.r.t. $T'$, where $z$ is the parent of $v_i$ in $T'$.

\section{Selecting edge \texorpdfstring{$f$}{f} in Step \ref{ln:find_f} of Procedure~\ref{alg:find_BCE}(\texorpdfstring{$\Lambda$}{Lambda})}
\label{sec:step 7}
In this section we show how to efficiently find an edge $f=(v,u) \in S(e)$ minimizing $w(f) + d_T(u, c)$, where $c$ is a vertex in $D_e$, as Procedure~\texttt{\ref{alg:find_BCE}} requires. We will show how this problem can be solved by using \emph{top-trees} \cite{AHLT05}.

A top-tree is a dynamic data structure that maintains a (weighted) forest $F$ of trees under \emph{link}  (i.e., edge-insertion) and \emph{cut} (i.e., edge-deletion) operations. Moreover, some of the vertices of $F$ can be \emph{marked} and the top-tree is able to perform \emph{closest marked vertex} (CMV, for short) queries, i.e., it can report the marked vertex that is closest to a given vertex $z$. A top-tree on $n$ vertices can be built in linear time and each of all the aforementioned operations requires $O(\log n)$ time.

We maintain a top-tree $\Upsilon_v$ of size $O(n)$ for each vertex $v \in V(T)$, and so we use a total of $O(n^2)$ space. Each of these top-trees is the tree $T$ augmented with some additional marked vertices. More precisely, for each $v \in V(T)$ and for each edge $f=(v,u) \in E(G) \setminus E(T)$ we add to $\Upsilon_v$ a marked vertex $\overline{u}$ and the edge $(u,\overline{u})$ with a weight of $w(f)$ (see Figure~\ref{fig:top-tree}).

Whenever we are finding a $v$-BSE for $e$ and we need to find the edge $f$ minimizing $w(f) + d_T(u, c)$ we do the following: (i) we cut the edge $e$ from $\Upsilon_v$, (ii) we perform a CMV query on $\Upsilon_v$ to find the closest marked vertex $\overline{u}$ to $c$, if any, (iii) we undo the cut operation by linking the endpoints of $e$ in $\Upsilon_v$, and finally (iv) we return the edge $(v,u)$ (or $\perp$ if no $\overline{u}$ has been found).

\begin{figure}
	\centering
	\includegraphics[scale=0.8]{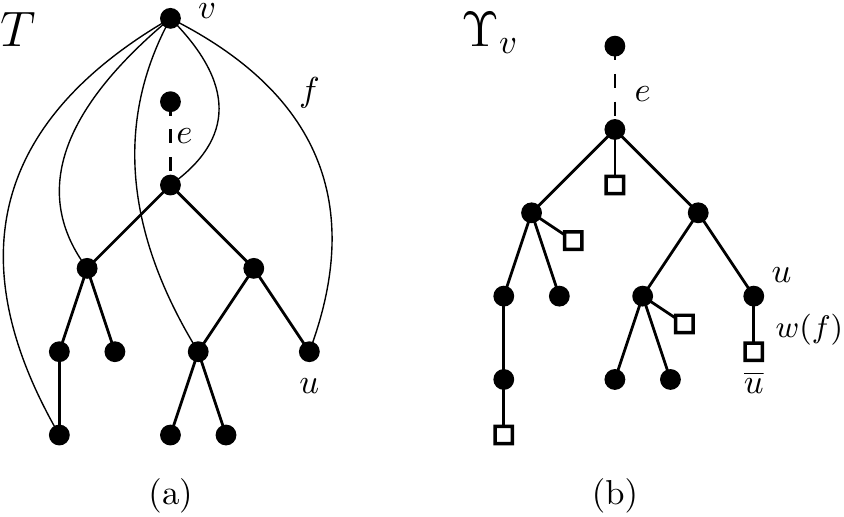}
	\caption{(a) The subtree of $T$ induced by $D_e$ along with swap edges in $S(e,v)$, and (b) the corresponding top-tree $\Upsilon_v$. The black vertices of $\Upsilon_v$ are the same of the tree $T$. For each $u \in V(T)$ such that $f=(v,u) \in S(e,v)$, $\Upsilon_v$ contains an additional vertex $\overline{u}$ (shown as a white square), and a corresponding edge $(u,\overline{u})$ with weight $w(f)$.}
	\label{fig:top-tree}
\end{figure}

\begin{lemma}
Let $e \in E(T)$ be a failing edge and let $c \in D_e$. An edge $f=(v,u) \in S(e,v)$ minimizing $w(f) + d_T(u, c)$ can be found in $O(\log n)$ time. Moreover, all the top-trees  $\Upsilon_v$ can be initialized in $O(n^2)$ time.
\end{lemma}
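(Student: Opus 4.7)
The plan is to establish the two bounds separately, directly from the construction of the top-trees $\Upsilon_v$ described just above the lemma. For initialization, I would note that for each $v \in V(T)$ the tree $\Upsilon_v$ consists of $T$ augmented by one pendant marked leaf $\overline{u}$ attached to $u$ (via an edge of weight $w(v,u)$) for every non-tree edge $(v,u) \in E(G) \setminus E(T)$. Hence $|V(\Upsilon_v)| \le n + \deg_G(v) = O(n)$, and since a top-tree on $k$ nodes can be built in $O(k)$ time, each $\Upsilon_v$ is built in $O(n)$ time. Summing over all $n$ choices of $v$ gives $O(n^2)$ total initialization time (and space).

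For the query bound, I would verify two facts about the ephemeral modification of $\Upsilon_v$ prescribed by the algorithm (cut $e$, CMV from $c$, link $e$). First, after cutting $e$, the backbone of $\Upsilon_v$ is split exactly along the $U_e/D_e$ partition of $T-e$, so the marked vertices reachable from $c$ are precisely those $\overline{u}$ with $u \in D_e$, which are precisely those corresponding to the swap edges $(v,u) \in S(e,v)$ (recall that in this phase $v \in U_e$, so $(v,u) \in S(e,v)$ iff $u \in D_e$). Second, for each such $\overline{u}$, the distance from $c$ to $\overline{u}$ in $\Upsilon_v$ with $e$ removed equals $d_T(c,u) + w(v,u) = d_T(u,c) + w(f)$, because the $c$-to-$u$ portion uses only tree edges of $T-e$ (which are untouched) and the final hop is the pendant edge of weight $w(v,u)$. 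Together these facts imply that the CMV query from $c$ returns an $\overline{u}$ minimizing $w(f) + d_T(u,c)$, and reports nothing exactly when $S(e,v)=\emptyset$, in which case we output $\bot$. Since a cut, a CMV query, and a link each cost $O(\log n)$ on a top-tree, the overall query cost is $O(\log n)$, and the trailing link restores $\Upsilon_v$ so that the same data structure can be reused across all subsequent queries.

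The only mildly delicate point is the second fact above: one needs the CMV distance to literally coincide with $w(f) + d_T(u,c)$ rather than some off-by-a-constant variant, and one needs this invariant to be preserved across repeated cut/link cycles as the algorithm iterates over all pairs $(e, v)$. Both are guaranteed by choosing the weight of the pendant edge $(u,\overline{u})$ to be exactly $w(v,u)$ at initialization, and by the fact that the link step is the exact inverse of the cut step in a top-tree, so $\Upsilon_v$ is returned to its pristine state after each query. Everything else reduces to standard top-tree usage.
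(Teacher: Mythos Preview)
Your proposal is correct and follows essentially the same approach as the paper: both argue that each $\Upsilon_v$ has $O(n)$ vertices and is built in linear time (yielding $O(n^2)$ overall), and both establish correctness of the query by observing that after cutting $e$ the marked vertices in the component of $c$ are exactly the $\overline{u}$ with $(v,u)\in S(e,v)$ and that $d_{\Upsilon_v-e}(c,\overline{u}) = d_T(c,u)+w(v,u)$, with the $O(\log n)$ bound following from a constant number of top-tree operations. Your explicit remark that the trailing link restores $\Upsilon_v$ for reuse is a helpful addition, but the argument is otherwise the same.
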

\begin{proof}
	Each of the $n$ top-trees $\Upsilon_v$ can be built in time $O(n)$ by explicitly considering all the edges in $E(v)$ (notice that  $\Upsilon_v$ contains at most $2n$ vertices  as there can be at most one marked vertex per vertex in $V(T)$).

	As for the time complexity of finding edge $f$, it immediately follows from the fact that we perform a constant number of link, cut, and CMV query operations, hence we only need to argue about correctness.
	
	Notice that after we cut edge $e$ from $\Upsilon_v$ in step (i), the tree $T^\prime$ of $\Upsilon_v$ containing $c$ has exactly one (distinct) marked vertex $\overline{u}$ for each edge $(u,v) \in S(e,v)$. The claim follows as, by construction, the distance from $c$ to $\overline{u}$ in $T^\prime$ is $d_{T^\prime}(c, \overline{u}) = d_{T^\prime}(c, u) + d_{T^\prime}(u, \overline{u}) = d_T(c, u) + w(f)$.
\end{proof}

\section{Dynamic maintenance of the upper envelopes}
\label{sec:UE}
Procedure~\texttt{\ref{alg:find_critical_top}} needs the auxiliary structure $\U_e$. Explicitly building such a structure for each edge $e$ would be too expensive. In the remaining of this section we show how all the $\U_e$'s can be built in $O(m \log^4 n)$ time and $O(m \log^2 n)$ space. The idea is to exploit the order in which failing edges are considered, so as to reuse previously computed information to build $\U_e$.

We implement $\U_e$ as a dictionary that allows us to add, delete, and search for elements in $O(\log n)$ time per operation. Each element of $\U_e$ is a  data structure that can store a set $\F$ of linear functions and is able (i) to dynamically add/remove a function to/from $\F$ in $O(\log |\F|)$ time, (ii) given $t \in \mathbb{R}$, to report a function in $\UE_\F(t)$ in $O(\log  |\F|)$ amortized time \cite{brodal2002dynamic}.

Each data structure in the dictionary is associated with a pair  $\Psi, \Lambda$ of trees of $\T$ and will contain all the functions $\Phi_{b,c,g}$ where $b$ and $c$ are the parent centroids of $\Psi$ and $\Lambda$, respectively, and $g \in S(e, V(\Psi) \cap U_e, V(\Lambda) \cap D_e)$. We name such a structure $\H^e_{(\Psi,\Lambda)}$. The pair $(\Psi, \Lambda)$ is also the key of $\H^e_{(\Psi ,\Lambda)}$ in the dictionary.

Observe now that we can answer the query $\U_e (f, \Psi, \Lambda )$ (used in \texttt{\ref{alg:find_critical_top}}) in $O(\log n)$ amortized time, as follows: we search for $\H^e_{\Psi , \Lambda} \in \U_e$ in $O(\log n)$ time, and then we perform a query operation on $\H^e_{\Psi , \Lambda}$ with $t = t_{b,c}(f)$ where $b$ and $c$ are the parent centroids of $\Psi$ and $\Lambda$, respectively (see Lemma \ref{lemma:linear_fnc}).

We now show how to build and maintain the $\U_e$'s.
Remember that we process the edges $e \in E(T)$ in a bottom-up fashion. Let $T_e$ be the subtree of $T$ induced by $D_e$. Whenever $T_e$ consists of a single vertex, we build $\U_e$ from scratch. If $T_e$ contains 2 or more vertices then there are at most two edges $e_1$, $e_2 \in E(T_e)$ that are incident to $e$. We build $\U_e$ by merging $\U_{e_1}$ and $\U_{e_2}$. This merge operation consists of a \emph{join} step followed by an \emph{update} step.

Whenever we add a function $\Phi_{b,c,g}$ to a structure $\H^e_{(\Psi,\Lambda)}$ of $\U_e$ and we are either performing the update step or we are building $\U_e$ from scratch, we say that we \emph{insert} $\Phi_{b,c,g}$ into $\U_e$.
We associate a non-negative integer $\nu_e$ to $\U_e$ that we call \emph{virtual size} of $\U_e$. The virtual size of $\U_e$ is the overall number of inserts that have been performed either on $\U_e$ itself or on any other $\H^{e'}_{(\Psi, \Lambda)}$ such that $e^\prime$ is an edge of $T_e$.

\subsection{Building \texorpdfstring{$\U_e$}{Ue} from scratch}

We start by creating an empty dictionary $\U_e$ (initially $\nu_e=0$).
Since we are building $\U_e$ from scratch, $T_e$ contains only one vertex, say $y$. For each edge $g =(x,y) \in S(e,U_e,y)$, we explicitly consider all the pairs of trees $(\Psi , \Lambda)$, such that $\Psi$ contains $x$ and $\Lambda$ contains $y$, and we let $b$ and $c$ be the parent centroids of $\Psi$ and $\Lambda$, respectively.
We look for $\H^e_{(\Psi,\Lambda)}$ in the dictionary of $\U_e$, if $\H^e_{(\Psi,\Lambda)}$ already exists, we add $\Phi_{b,c,g}$ to $\H^e_{(\Psi,\Lambda)}$. If $\H^e_{(\Psi,\Lambda)}$ is not found, we create a new empty structure $\H^e_{(\Psi,\Lambda)}$, we add $\Phi_{b,c,g}$ into $\H^e_{(\Psi,\Lambda)}$, and we add $\H^e_{(\Psi,\Lambda)}$ to $\U_e$. In both cases we have that $\Phi_{b,c,g}$ is \emph{inserted} into $\U_e$ and hence we increase $\nu_e$ by $1$.

\subsection{Building \texorpdfstring{$\U_e$}{Ue} by merging}

Let $e=(p,q)$ and remember that $T_e$ contains more than $1$ vertex. Since $q$ has degree at most $3$ in $T$, there are either $1$ or $2$ edges in $T_e$ that are incident to $q$. Here we will discuss the case in which those edges are exactly $2$ (as the case in which $q$ is incident to only one edge is simpler).

Let $e_1$, $e_2$ be the two edges incident to $q$ in $T_e$. We will merge $\U_{e_1}$ and $\U_{e_2}$ in order to obtain $\U_e$. This operation is destructive, i.e., $\U_{e_1}$ and $\U_{e_2}$ will no longer exist at the end of the merge operation. Notice, however, that since we are processing the edges of $T$ in a bottom-up fashion, $\U_{e_1}$ and $\U_{e_2}$ will no longer be needed by the algorithm.

The merge operation consists of two steps: the \emph{join step} and the \emph{update step}.

\subsubsection{The join step}

W.l.o.g., let $\nu_{e_1} \ge \nu_{e_2}$.
We start by renaming $\U_{e_1}$ to $\U_e$ (so that all the structures that belong to the dictionary of $\U_{e_1}$ that were named $\H^{e_1}_{(\Psi,\Lambda)}$ are now named $\H^e_{(\Psi,\Lambda)}$).

Now, for each structure $\H^{e_2}_{(\Psi,\Lambda)}$ in $\U_{e_2}$, we first search for the structure $\H^{e}_{(\Psi,\Lambda)}$ in $\U_{e}$ and, if such a structure is not found, we add new empty structure $\H^e_{(\Psi,\Lambda)}$ to $\U_e$.
Then, we \emph{move} each function $\Phi_{b,c,g}$ in $\H^{e_2}_{(\Psi,\Lambda)}$ to $\H^{e}_{(\Psi,\Lambda)}$, i.e., we remove $\Phi_{b,c,g}$ from $\H^{e_2}_{(\Psi,\Lambda)}$ and, if $\Phi_{b,c,g}$ is not in $\H^{e}_{(\Psi,\Lambda)}$, we add it to $\H^{e}_{(\Psi,\Lambda)}$.
Finally, after all the structures $\H^{e_2}_{(\Psi,\Lambda)}$ in $\U_{e_2}$ have been considered, we destroy $\U_{e_2}$ and we set $\nu_e$ to $\nu_{e_1} + \nu_{e_2}$.

\subsubsection{The update step}

After the merge step is completed, $\U_e$ contains all the functions corresponding to the edges $g$ in $S(e_1) \cup S(e_2)$.

Notice, however, that all the edges $(x,y)$ such that the \emph{lowest common ancestor} (LCA) of $x$ and $y$ in $T$ is $q$ are both in $S(e_1)$ and $S(e_2)$ but they do not belong to $S(e)$, and hence they should not appear in $\U_e$. On the converse, the edges in $S(e, U_e, q)$ are neither in $S(e_1)$ nor in $S(e_2)$ but they belong to $S(e)$, hence their corresponding functions should be added to $\U_e$. This is exactly the goal of the update step.

We start by deleting the extra functions from $\U_e$.
We iterate over each edge $g=(x,y)$ such that the LCA of $x$ and $y$ is $q$ and, for each pair of trees $(\Psi , \Lambda)$ such that $\Psi$ contains $x$ and $\Lambda$ contains $y$, we delete $\Phi_{b,c,g}$ from $\H^e_{( \Psi, \Lambda )}$ where $b$ and $c$ are the parent centroids of $\Psi$ and $\Lambda$ respectively. If $\H^e_{( \Psi, \Lambda )}$ becomes empty, we also delete $\H^e_{( \Psi, \Lambda )}$ from $\U_e$.

We now add the missing functions to $\U_e$.
For each $g=(x,q) \in S(e, U_e, q )$, and for each pair of trees $(\Psi, \Lambda)$, such that $\Psi$ contains $x$ and $\Lambda$ contains $q$, we first search for $\H^e_{( \Psi, \Lambda )}$ in $\U_e$ and, if it does not exist, we add new empty structure $\H^e_{( \Psi, \Lambda )}$ to $\U_e$. Then, we add $\Phi_{b,c,g}$ to $\H^e_{( \Psi, \Lambda )}$, where $b$ and $c$ are the parent centroids of $\Psi$ and $\Lambda$ respectively. We increase $\nu_e$ by $1$ to account for this insertion.

\subsection{Analysis}\label{sec:analysis UE}
Here we bound the time required to dynamically maintain all the upper envelope structures.
\begin{lemma}
	\label{lemma:number_of_functions_Phi}
	The overall number of distinct functions $\Phi_{b,c,g}$ ever inserted into at least one of the structures $\U_e$ is $O(m \log^2 n)$.
\end{lemma}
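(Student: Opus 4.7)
The plan is to fix a non-tree edge $g$ and bound the number of distinct triples $(b,c,g)$ that can ever arise from an insertion, and then sum the resulting bound over all $O(m)$ non-tree edges.

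First I would inspect the two places where an insertion occurs (the ``build from scratch'' branch, and the ``add missing functions'' part of the update step) and observe that in both of them the function $\Phi_{b,c,g}$ that is inserted into some $\H^{e}_{(\Psi,\Lambda)}$ satisfies: $g=(x,y)$ with $x\in V(\Psi)\cap U_e$ and $y\in V(\Lambda)\cap D_e$, while $b$ and $c$ are by definition the parent centroids of $\Psi$ and $\Lambda$ in $\T$. Hence any inserted $\Phi_{b,c,g}$ is fully determined by the edge $g$ (together with the designation of which endpoint plays the role of $x$ and which plays the role of $y$) and by an ordered pair $(\Psi,\Lambda)$ of trees of the centroid decomposition $\T$ such that $x\in V(\Psi)$ and $y\in V(\Lambda)$.

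The main combinatorial observation is that every vertex $z\in V(T)$ belongs to only $O(\log n)$ trees of $\T$: precisely those trees corresponding to ancestors, in $\T$, of the leaf of $\T$ associated with $z$; this follows from the height bound of $O(\log n)$ on $\T$. Consequently, once we fix an oriented non-tree edge $g=(x,y)$, there are $O(\log n)$ choices for $\Psi$ and $O(\log n)$ choices for $\Lambda$, which together give $O(\log^2 n)$ possible pairs $(\Psi,\Lambda)$ and therefore $O(\log^2 n)$ possible pairs $(b,c)$. The two possible orientations of $g$ (which endpoint is $x$ and which is $y$, depending on the failing edge $e$) only multiply the count by $2$, so each non-tree edge $g$ contributes at most $O(\log^2 n)$ distinct functions $\Phi_{b,c,g}$ across all insertions into any $\U_e$.

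Summing over the $m-(n-1)=O(m)$ non-tree edges (the tree edges themselves never arise as swap edges) yields the claimed bound $O(m\log^2 n)$ on the total number of distinct functions ever inserted. I do not expect any real obstacle here: the argument is essentially a counting argument whose only non-trivial input is the $O(\log n)$ height of the centroid decomposition, which has already been established. The one small point to double-check is that both the ``build from scratch'' step (where $T_e$ is a single vertex $y$) and the update step (where $g=(x,q)$ with $q$ the lower endpoint of $e$) do indeed produce triples $(b,c,g)$ conforming to the structural constraints used above, but this is immediate from the pseudocode.
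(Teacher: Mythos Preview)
Your proposal is correct and follows essentially the same approach as the paper's proof: fix a non-tree edge $g=(x,y)$, use the $O(\log n)$ height of $\T$ to conclude that each endpoint lies in $O(\log n)$ trees of the centroid decomposition, hence there are $O(\log^2 n)$ pairs $(\Psi,\Lambda)$ and therefore $O(\log^2 n)$ distinct functions $\Phi_{b,c,g}$, and sum over all $O(m)$ non-tree edges. Your additional remarks about the two insertion branches and the orientation of $g$ are sound but not strictly needed for the counting argument.
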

\begin{proof}
	Let us consider any edge $g=(x,y) \in E(G) \setminus E(T)$.	If a function $\Phi_{b,c,g}$ associated with $g$ is inserted into any $\U_e$, this means that it added to some $\H^e_{(\Psi, \Lambda )}$ such that $x \in V(\Psi)$, $y \in V(\Lambda)$, and $b$ (resp. $c$) is the parent centroids of $\Psi$ (resp. $\Lambda$). Notice that there are $O(\log n)$ trees $\tau$ of the centroid decomposition $\T$ that contain $x$ (resp. $y$), meaning that there are $O(\log^2 n)$ functions $\Phi_{b,c,g}$ associated to $g$. The claim follows by summing over all the edges in $E(G) \setminus E(T)$.
\end{proof}

\begin{lemma}
	\label{lemma:history_insertions}
	Each function $\Phi_{b,c,g}$ contributes at most $2$ to the virtual size $\nu_e$ of any $\U_e$.
\end{lemma}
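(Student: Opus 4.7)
The approach is to enumerate every event that could insert the specific function $\Phi_{b,c,g}$ and argue that at most two of them contribute to $\nu_e$. By the paper's definition of ``insert'', such an event occurs only during the build-from-scratch of some $\U_{e'}$ (performed when $T_{e'}$ is a single vertex) or during the update step that follows a merge; the join step instead merely \emph{moves} functions between $\H$-structures and is not counted towards any $\nu_{e'}$. In both insertion routines the considered edges are precisely those incident to the current lower endpoint $q$, namely the edges in $S(e', U_{e'}, q)$. Hence, writing $g=(x^*,y^*)$, an insertion involving $g$ can arise only at $e_1 := (\mathrm{parent}(y^*), y^*)$ (when $q = y^*$) or at $e_2 := (\mathrm{parent}(x^*), x^*)$ (when $q = x^*$). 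Moreover, $e_i$ contributes to $\nu_e$ only if $e_i \in T_e \cup \{e\}$, i.e., only if the relevant endpoint of $g$ lies in $D_e$ (or equals the lower endpoint of $e$).

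Within the processing of $e_1$, the routine considers each pair $(\Psi, \Lambda)$ with $x^* \in V(\Psi)$ and $y^* \in V(\Lambda)$ exactly once and performs a single insertion per pair. The trees of $\T$ containing a given vertex are nested along a root-to-leaf path of $\T$, so each vertex belongs to at most one tree of $\T$ with a prescribed parent centroid. Consequently the pair $(\Psi, \Lambda)$ matching the given centroids $b, c$ is unique, and $\Phi_{b,c,g}$ is inserted at most once during $e_1$'s processing. Symmetrically at $e_2$, the matching pair is $(\Psi', \Lambda') = (\Lambda, \Psi)$, with $x^*$ now playing the lower and $y^*$ the upper role.

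The observation that identifies the two events as insertions of the \emph{same} mathematical function is the symmetry
\[
    \Phi_{b, c, (x, y)} \;=\; \Phi_{c, b, (y, x)},
\]
which is immediate from the definition, because the slope $1/w(g)$ is symmetric and $d_T(x^*, b) + d_T(c, y^*) = d_T(y^*, c) + d_T(b, x^*)$. Combining these facts, $\Phi_{b,c,g}$ can be inserted at most twice in the course of the algorithm, giving the claimed bound of $2$ on its contribution to $\nu_e$.

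The main obstacle I anticipate is precisely to pin down this symmetry: one must distinguish the syntactic tag $\Phi_{b,c,g}$ (implicitly tied to an orientation of $g$'s endpoints that is determined by the currently failing edge) from the mathematical function it denotes, and then argue that the two potential insertion events, arising from the two possible orientations of $g$ relative to the failing edge, really do refer to the same function.
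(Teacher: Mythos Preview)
Your proof is correct and follows essentially the same approach as the paper: both arguments observe that an insertion of $\Phi_{b,c,g}$ (as opposed to a move during a join) can only occur when the lower endpoint $q$ of the currently processed tree edge coincides with an endpoint of $g$, so that at most the two edges $e_x,e_y$ immediately above $x$ and $y$ can generate an insertion. The paper additionally splits into the cases ``$x$ is an ancestor of $y$'' versus not, observing that in the former case only $\U_{e_y}$ receives an insertion; your uniform bound of~$2$ is of course sufficient. Your symmetry observation $\Phi_{b,c,(x,y)}=\Phi_{c,b,(y,x)}$ is correct and makes explicit why the insertion performed at $e_2$ (where the roles of the endpoints, and hence of $\Psi$ and $\Lambda$, are swapped) is an insertion of the \emph{same} function; the paper's proof leaves this identification implicit. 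This is a legitimate clarification rather than a different method.
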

\begin{proof}
	It suffices to bound the overall number of insertions of $\Phi_{b,c,g}$ (regardless of the structure $\U_e$ into which $\Phi_{b,c,g}$ is inserted).
	To this aim, consider the edge $g=(x,y)$ associated with $\Phi_{b,c,g}$ and, w.l.o.g., let $x$ be the vertex that is closest to the root of $T$.  Let also $e_x$ (resp. $e_y)$ be the edge from the parent of $x$ to $x$ (resp. from the parent of $y$ to $y$) in $T$.
	We distinguish two cases depending on the relative positions of $x$ and $y$ in $T$.
	If $x$ is an ancestor of $y$, then  $\Phi_{b,c,g}$ is only inserted in $\U_{e_y}$.
	Indeed, $g$ belongs to $S(e_y)$ but it does not belong to any $S(e')$ where $e' \in E(T_{e_y})$ is incident to $e_y$ in $T_{e_y}$. For any other pair of edges $e'',e''' \in E(T)$ such that $e'''$ is incident to $e''$ in $T_{e''}$ we have that either $g$ does not belong to $S(e'')$, or it belongs to both $S(e'')$ and $S(e''')$, and hence $\Phi_{b,c,g}$ is not added to $\U_{e''}$.
	If $x$ is not an ancestor of $y$, then a similar argument shows that $\Phi_{b,c,g}$ can only be inserted in $\U_{e_x}$ and in $\U_{e_y}$. The claim follows.
\end{proof}

\begin{lemma}
	\label{lemma:phi_moves}
	Each function $\Phi_{b,c,g}$ is moved $O(\log n)$ times.
\end{lemma}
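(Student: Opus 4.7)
The plan is to apply a classical ``smaller-to-larger'' amortized argument, using the virtual size $\nu_e$ as the potential. The crucial observation is that during the join step of a merge between $\U_{e_1}$ and $\U_{e_2}$, only the functions living in the \emph{smaller} side (say $\U_{e_2}$, with $\nu_{e_2} \le \nu_{e_1}$) are actually moved: the larger side is simply renamed to $\U_e$ at no cost. Thus, whenever a particular function $\Phi_{b,c,g}$ is moved, it is moved from the smaller of the two merged structures.

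Suppose that, immediately before a move, $\Phi_{b,c,g}$ resides in $\U_{e_2}$. After the move, $\Phi_{b,c,g}$ lives in $\U_e$, whose virtual size is
\[
\nu_e \;=\; \nu_{e_1} + \nu_{e_2} \;\ge\; 2\,\nu_{e_2},
\]
because $\nu_{e_1} \ge \nu_{e_2}$ by the choice of which side is renamed. Hence every move of $\Phi_{b,c,g}$ at least doubles the virtual size of the $\U$-structure containing it. Since $\Phi_{b,c,g}$ is present in some $\U$-structure only after it has been inserted at least once, the virtual size of the hosting structure is $\ge 1$ at the time of the first move, and therefore after $k$ moves it is at least $2^k$.

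It remains to upper bound the virtual size of any $\U_e$. By Lemma~\ref{lemma:history_insertions}, each distinct function $\Phi_{b',c',g'}$ contributes at most $2$ to $\nu_e$, and by Lemma~\ref{lemma:number_of_functions_Phi} the total number of distinct functions that can ever be inserted is $O(m \log^2 n)$. Consequently, $\nu_e = O(m \log^2 n) = O(n^2 \log^2 n)$ for every $e \in E(T)$.

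Combining the doubling inequality $\nu \ge 2^k$ with the upper bound $\nu = O(n^2 \log^2 n)$ yields $k = O(\log(n^2 \log^2 n)) = O(\log n)$, which is exactly the claimed bound. The only mildly delicate point is making sure the doubling comparison is with the structure hosting $\Phi_{b,c,g}$ immediately \emph{before} the move, rather than with its initial host at insertion time; but this is precisely how the smaller-to-larger charging was set up above, and the bound on $\nu_e$ applies uniformly to any hosting structure, so the argument goes through.
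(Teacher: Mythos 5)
Your proof is correct and takes essentially the same approach as the paper's: the smaller-to-larger argument where each move at least doubles the virtual size of the hosting structure, combined with the $O(m\log^2 n)$ bound on virtual size from Lemmas~\ref{lemma:number_of_functions_Phi} and~\ref{lemma:history_insertions}. The only cosmetic difference is that you pass through $O(n^2\log^2 n)$ rather than using $O(m\log^2 n)$ directly, which does not change the final bound.
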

\begin{proof}
	When a function is moved from any $\U_{e_2}$ to $\U_{e}$ it is because we are merging
	$\U_{e_1}$ with $\U_{e_2}$, where $e_1$ and $e_2$ are edges incident to $e$ in $T_e$.
	Notice that, before the merge operation takes place, we must have $\nu_{e_1} \ge \nu_{e_2}$ and hence, at the end of the merge operation, $\nu_e \ge \nu_{e_1} + \nu_{e_2} \ge 2 \nu_{e_2}$.
	In other words, each time a function $\Phi_{b,c,g}$ is moved, we have that the \emph{virtual size} of the structure to which $\Phi_{b,c,g}$ belongs at least doubles. Therefore, after a function has been moved $r$ times, the structure containing $\Phi_{b,c,g}$ must have a virtual size of at least $2^r$.
	
	Notice now that Lemma~\ref{lemma:number_of_functions_Phi} and Lemma~\ref{lemma:history_insertions} imply an upper bound of $O(m \log^2 n)$ to the virtual size of any $\U_e$. We can conclude that a function can be moved $O( \log (m \log^2 n) ) = O( \log n )$ times.
\end{proof}

\begin{proposition}
	The total time spent building and merging all the data structures $\U_e$ is $O(m \log^4 n)$.
\end{proposition}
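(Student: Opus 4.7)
My plan is to charge the total running time to four disjoint categories of elementary operations performed throughout the algorithm: (i) the insertions of functions when a structure $\U_e$ is built from scratch; (ii) the moves of functions during the join step of a merge; (iii) the deletions during the update step, corresponding to edges $g=(x,y)$ whose LCA becomes $q$; and (iv) the insertions during the update step for the newly relevant edges in $S(e, U_e, q)$. Since every elementary manipulation of a structure $\H^e_{(\Psi,\Lambda)}$ --- namely, a dictionary lookup in $\U_e$ together with an add, remove, or search in the upper-envelope data structure of \cite{brodal2002dynamic} --- costs $O(\log n)$ (amortized) time, it will suffice to bound the number of such operations.

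I would then invoke the three preceding lemmas. Lemma~\ref{lemma:number_of_functions_Phi} bounds the number of distinct functions $\Phi_{b,c,g}$ by $O(m\log^2 n)$; Lemma~\ref{lemma:history_insertions} guarantees that each such function contributes at most $2$ to the virtual size of any $\U_e$, so that the total number of insertions of types (i) and (iv) across all merges is $O(m\log^2 n)$. The number of deletions in (iii) is bounded by the total number of preceding insertions (each function can be deleted only after having been inserted), and is therefore also $O(m\log^2 n)$. Finally, Lemma~\ref{lemma:phi_moves} gives $O(\log n)$ moves per function, so the total number of moves of type (ii) is $O(m\log^3 n)$. Multiplying by the $O(\log n)$ per-operation cost yields the claimed bound $O(m\log^4 n)$, dominated by the cost of the moves.

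The main subtlety will be that the join step naively wants to be charged by the virtual size of the smaller of the two structures being merged, which could in principle be as large as the total number of functions. What saves us is the weighted-union rule ``rename the larger, move the smaller into it'' built into the join step: because of it, every move of a function at least doubles the virtual size of the structure in which the function ends up, which is exactly the doubling argument exploited by Lemma~\ref{lemma:phi_moves}. A minor secondary point I will need to address is that the dictionary lookups performed once per $\H^{e_2}_{(\Psi,\Lambda)}$ during a join step are dominated by the cost of moving the functions out of that structure (any non-empty such structure contains at least one function that will be moved), so the cost of iterating over the entries of $\U_{e_2}$ is absorbed into the overall $O(m\log^4 n)$ bound.
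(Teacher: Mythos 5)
Your proof is correct and takes essentially the same approach as the paper: it decomposes the total cost into insertions, deletions, and moves, bounds the number of each via Lemmas~\ref{lemma:number_of_functions_Phi}, \ref{lemma:history_insertions}, and \ref{lemma:phi_moves}, and multiplies by the $O(\log n)$ per-operation cost, with the $O(m\log^4 n)$ total dominated by the move cost. Your closing remark that the per-structure dictionary lookups in the join step are absorbed into the cost of the moves makes explicit a detail the paper glosses over, but does not change the argument.
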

\begin{proof}
	From Lemma~\ref{lemma:number_of_functions_Phi} and Lemma~\ref{lemma:history_insertions} we have that the total number of insertions of functions $\Phi_{b,c,g}$ into the structures $\H^e_{(\Psi, \Lambda )}$ is $O(m \log^2 n)$ and, since each insertion requires time $O(\log n)$, the total time spent due to insertions is $O(m \log^3 n)$. Moreover, since each function is deleted at most once, and a deletion takes $O(\log n)$ time, we have
that the total time spent for deleting functions is $O(m \log^3 n)$.

	Concerning moving of functions, by Lemma~\ref{lemma:phi_moves} we have that every function is moved $O(\log n)$ times. Since there are $O(m \log^2 n)$ functions, as shown by Lemma~\ref{lemma:number_of_functions_Phi}, and a function can be moved in $O(\log n)$ time, we have that the total time spent moving functions is $O(m \log^4 n)$.
\end{proof}

\end{document}